\documentclass[11pt,letterpaper]{article}
\usepackage[T1]{fontenc}
\usepackage[utf8]{inputenc}
\usepackage{lmodern}
\usepackage[margin=1in,headheight=14pt,headsep=17pt,footskip=28pt]{geometry}
\usepackage{amsmath,amssymb,amsthm,mathtools}
\usepackage{xcolor}
\usepackage{microtype}
\usepackage{array,booktabs,longtable}
\usepackage{tikz-cd}
\usepackage{enumitem}
\usepackage{needspace}
\usepackage[titles]{tocloft}
\usepackage{fancyhdr}
\usepackage[hidelinks,bookmarksnumbered]{hyperref}
\hypersetup{
 pdftitle={Bohnenblust-Hille Bound on the Boolean Cube: Exponent 2.943},
 pdfkeywords={Bohnenblust-Hille, Boolean cube, weighted bootstrap, hypercontractivity}
}
\setlist[enumerate]{leftmargin=*,itemsep=3pt,topsep=4pt}
\newtheorem{theorem}{Theorem}[section]
\newtheorem{lemma}[theorem]{Lemma}
\newtheorem{proposition}[theorem]{Proposition}
\newtheorem{corollary}[theorem]{Corollary}
\theoremstyle{remark}
\newtheorem{remark}[theorem]{Remark}
\newcommand{\E}{\mathbb E}
\newcommand{\Prob}{\mathbb P}
\newcommand{\C}{\mathbb C}

\newcommand{\ee}{\mathrm e}
\newcommand{\wh}{\widehat}
\newcommand{\Om}{\Omega}
\newcommand{\norm}[1]{\left\lVert #1\right\rVert}
\newcommand{\abs}[1]{\left\lvert #1\right\rvert}
\newcommand{\wtX}{\widetilde X}
\newcommand{\wtY}{\widetilde Y}
\newcommand{\Glow}{c_{\mathrm{low}}}
\newcommand{\Ghigh}{c_{\mathrm{high}}}
\newcommand{\Dlow}{D_r^{\mathrm{low}}}
\newcommand{\Dhigh}{D_r^{\mathrm{high}}}
\newcommand{\sect}[1]{\S\,\ref{#1}}
\newcolumntype{P}[1]{>{\raggedright\arraybackslash}p{#1}}

\tikzset{
  bh vertex/.style={
    circle, fill=black, draw=none,
    inner sep=0pt, outer sep=0pt, minimum size=3.2pt
  },
  bh label/.style={
    rectangle, fill=none, draw=none,
    inner sep=1.4pt, outer sep=0pt, font=\Large
  },
  bh above/.style={
    bh vertex, label={[bh label, label distance=2pt]above:{#1}}
  },
  bh below/.style={
    bh vertex, label={[bh label, label distance=2pt]below:{#1}}
  }
}

\tikzcdset{
  bh tree/.style={
    arrows={dash, line width=0.45pt},
    cells={nodes={rectangle, inner sep=0pt, outer sep=0pt,
                  anchor=center}},
    column sep={4.4mm,between origins},
    row sep={7mm,between origins}
  }
}

\newcommand{\ExampleRoot}{13}

\DeclareMathOperator{\Var}{Var}
\DeclareMathOperator{\Inf}{Inf}

\theoremstyle{definition}
\newtheorem{definition}[theorem]{Definition}
\newtheorem{conjecture}[theorem]{Conjecture}

\begin{document}
\thispagestyle{plain}
\begin{center}
 {\LARGE\bfseries An $m^{2.943}$ Bohnenblust--Hille Bound\\[4pt]
 on the Boolean Cube\par}
 \vspace{8pt}
 {\large Joseph Slote, Chun-Kai Tseng, Alexander Volberg\par}
 Revised September 16, 2026
\end{center}

\begin{abstract}
Let $q_m=2m/(m+1)$ and put
\[
 \beta_0=\frac{3}{2}+\frac{1}{\log 2}=2.9426950408\ldots,
\]
where $\log$ is the natural logarithm. We give a proof scheme showing that,
for every $\varepsilon>0$, there is $C_\varepsilon<\infty$ such that every
complex-valued function $f:\{-1,1\}^n\to\C$ of Fourier degree at most $m$
satisfies
\[
 \left(\sum_{A\subseteq[n]}\abs{\wh f(A)}^{q_m}\right)^{1/q_m}
 \le C_\varepsilon m^{\beta_0+\varepsilon}\norm{f}_\infty.
\]
The improvement over the
$m^9$ estimate of the earlier draft has two ingredients. The first three
Fourier levels are estimated at the scales $m$, $m^{3/2}$, and $m^{5/3}$.
These losses are encoded in the weight
$M^{\mu_r/r}r^B\min\{r,L_M\}^8$, $L_M\asymp_B \log (M+1)$, with $\mu_r=\lceil3r/2\rceil$ for $r\ge2$.
A parity-compatible central window handles $r<L_M$, while a shrinking
balanced window handles $r\ge L_M\asymp_B\log(M+1)$, no parity is used in this regime. The leading
high-degree loss--gain factor is $\ee\,2^{-B}$; careful uniform bounds
close the bootstrap for every $B>1/\log2$. The formula for $\mu_r$ is explained below.
\end{abstract}
\vspace{-5pt}
{\small\setlength{\parskip}{0pt}\tableofcontents}

\clearpage
\section*{Notation guide}
\phantomsection\addcontentsline{toc}{section}{Notation guide}
\label{sec:notation}
All $L^p$ norms and expectations on a cube use normalized counting measure;
coefficient $\ell^p$ norms use unnormalized sums. We distinguished  $\ee=2.7818\dots$
 and  italic $e$, which is a degree of a block. The fixed cutoff
constant $C_B$ is never used as a generic constant: $C(B)$ denotes a
constant depending only on $B$ that may change between occurrences.

\begingroup
\small
\setlength{\tabcolsep}{5pt}
\renewcommand{\arraystretch}{1.13}
\begin{longtable}{@{}P{0.22\textwidth}P{0.615\textwidth}P{0.12\textwidth}@{}}
\toprule
\textbf{Symbol} & \textbf{Meaning and role} & \textbf{Location} \\
\midrule
\endfirsthead
\multicolumn{3}{@{}l}{\textit{Notation guide (continued)}}\\[3pt]
\toprule
\textbf{Symbol} & \textbf{Meaning and role} & \textbf{Location} \\
\midrule
\endhead
\midrule
\multicolumn{3}{r@{}}{\textit{Continued on the next page}}\\
\endfoot
\bottomrule
\endlastfoot
\multicolumn{3}{@{}l}{\textbf{Fourier data and degree parameters}}\\*[2pt]
$\Om_n$, $w_A$ & $\{-1,1\}^n$ and its Walsh character $w_A(x)=\prod_{j\in A}x_j$. & \sect{sec:statement}\\
$\wh f(A)$, $f_r$ & Walsh coefficient and degree-$r$ homogeneous part of $f$. & \sect{sec:statement}\\
$m$, $M$ & $m$: terminal degree in the BH statement; $M$: fixed degree cap during the bootstrap. & \sect{sec:statement}\\
$n$, $N$ & Actual dimension and a temporary finite upper bound on dimension. & \sect{sec:weighted}\\
$q_r$, $A_r(F)$ & $q_r=2r/(r+1)$; $A_r(F)$ is the coefficient $\ell^{q_r}$ norm on level $r$, not the full coefficient norm. & \eqref{eq:Ar}\\
$\beta_0$ & $\frac32+\frac{1}{\log2}$, the limiting exponent in the stated bound. & \eqref{eq:main}\\
$T_\rho$ & Noise operator: it multiplies level $r$ by $\rho^r$. & \sect{sec:noise}\\
$C_1,C_2,C_3$ & Absolute constants in the first-, second-, and third-level estimates. & \eqref{eq:first}, \eqref{eq:second}, \eqref{eq:third}\\
\addlinespace[5pt]
\multicolumn{3}{@{}l}{\textbf{Two-block coefficient arrays}}\\*[2pt]
$d,e,r,\theta$ & Block degrees $d,e$, total degree $r=d+e$, and interpolation weight $\theta=d/r$. & \sect{sec:mixed}\\
$Q_{d,e}$, $a(Q_{d,e})$ & Part of $Q(x,y)$ of degree $d$ in $x$ and $e$ in $y$, and its coefficient vector. & \sect{sec:weighted}\\
$X_{d,e}$, $Y_{d,e}$ & Row norm $\ell_A^{q_d}(\ell_B^2)$ and column norm $\ell_B^{q_e}(\ell_A^2)$. & Lemma~\ref{lem:two-block}\\
$\rho_s$ & $\sqrt{(s-1)/(s+1)}=\sqrt{q_s-1}$, defined for $s\ge2$. & \sect{sec:weighted}\\
$\wtX_{d,e}$, $\wtY_{d,e}$ & $\rho_d^e X_{d,e}$ for $d\ge2,e\ge0$; $\rho_e^dY_{d,e}$ for $d\ge0,e\ge2$. & Lemma~\ref{lem:simultaneous}\\
$B_{d,A}(y)$, $g_d(y)$ & The $x$-coefficient function with index $A$; $g_d(y)=A_d(Q(\cdot,y))$. & Lemma~\ref{lem:simultaneous}\\
$R(d,e)$ & $\rho_d^{-ed/r}\rho_e^{-de/r}$, the exact cost of removing the two noise factors. & \eqref{eq:deweight}\\
\addlinespace[5pt]
\multicolumn{3}{@{}l}{\textbf{Weights and fixed bootstrap parameters}}\\*[2pt]
$B$, $\eta_B$ & Weight exponent $B>1/\log2$ and positive margin $\eta_B=B\log2-1$. & \eqref{eq:eta}\\
$C_B$, $R_B$ & Fixed constants depending only on $B$: $C_B\ge8/\eta_B$ and a sufficiently large integer $R_B\ge6^4$. & \eqref{eq:cutoff}, \eqref{eq:RB}\\
$L_M$ & $\lceil\max\{R_B,C_B\log(M+1)\}\rceil$, the transition and saturation scale. & \eqref{eq:cutoff}\\
$\mu_r$, $\mu_r/r$ & Degree potential and actual exponent of $M$ in the weight. Here $\mu_1=1$ and $\mu_r=\lceil3r/2\rceil$ for $r\ge2$. & \eqref{eq:mu}\\
$\sigma_{M,r}$ & $\min\{r,L_M\}^8$, the auxiliary factor that saturates logarithmically. & \eqref{eq:sigma}\\
$w_{M,r}$ & $M^{\mu_r/r}r^B\sigma_{M,r}$, the full level weight. & \eqref{eq:weight}\\
$W_M(F)$ & $\bigl(\sum_{r=1}^M A_r(F)^2/w_{M,r}^2\bigr)^{1/2}$. & \eqref{eq:WM}\\
$\Gamma_{M,N}(B)$ & Best ratio $W_M(F)/\norm{F}_\infty$ for degree $\le M$ and dimension $\le N$. Finite before uniform closure. & \eqref{eq:Gamma}\\
$K_B$ & Uniform bound for the weighted norm after closure; independent of $M,n,N$. & Theorem~\ref{thm:weighted}\\
\addlinespace[5pt]
\multicolumn{3}{@{}l}{\textbf{Random splitting and contraction}}\\*[2pt]
$D_r$, $p_r$, $k_r$ & Allowed values of $d$ (for fixed $M$); their binomial probability; their number. & \eqref{eq:capture-def}\\
$\omega$, $Q_\omega$ & A random bicoloring and the resulting two-block version of $F$, with unchanged supremum norm. & \sect{sec:abstract}\\
$\Dlow$, $\Dhigh$ & Parity-compatible central window for $r<L_M$; near-balanced window for $r\ge L_M$. & \eqref{eq:Dlow}, \eqref{eq:Dhigh}\\
$h(t)$ & Binary entropy $-t\log t-(1-t)\log(1-t)$. & \eqref{eq:entropy}\\
$K_M(r,d)$ & Product of probability, split-count, deweighting, and weight-ratio factors. & \eqref{eq:KM}\\
$\kappa_M(r,d)$ & $\sqrt{2\max\{d/r,1-d/r\}}\,K_M(r,d)$, the full local contraction multiplier. & \sect{sec:choices}\\
$c=c(M)$ & Supremum of $\kappa_M(r,d)$ over all retained splits; $0$ for an empty range. & \eqref{eq:c-def}\\
$\Glow(M)$, $\Ghigh(M)$ & Suprema of the same multiplier over the low and high ranges. Their \emph{maximum}, not sum, is $c(M)$. & \sect{sec:choices}\\
$c_*$, $c_B$ & $c_*=45/(2\ee^4)<1$; $c_B=\max\{c_*,\exp(-5\eta_B/8)\}<1$, a uniform bound for $c(M)$. & \sect{sec:combine}\\
$E_B(r)$ & Sum of finite-$r$ logarithmic errors in the high-range multiplier. & \eqref{eq:EB}\\
$U_{d,e},V_{d,e},G$ & $\wtX_{d,e}/w_{M,d}$, $\wtY_{d,e}/w_{M,e}$, and $G=\Gamma_{M,N}(B)\norm{F}_\infty$. & Prop.~\ref{prop:abstract}\\
$u_{d,e},v_{d,e}$ & Temporary normalized squared energies $U_{d,e}^2/G^2$, $V_{d,e}^2/G^2$ in the contraction proof. & Prop.~\ref{prop:abstract}\\
\addlinespace[5pt]
\multicolumn{3}{@{}l}{\textbf{Terminal interpolation}}\\*[2pt]
$E_r$ & $\norm{f_r}_2$; unrelated to the high-range error $E_B(r)$. & \sect{sec:terminal}\\
$u_r$, $v_r$ & $A_r(f)/(K_Bw_{m,r}\norm f_\infty)$ and $E_r/\norm f_\infty$; each has squared sum at most $1$. & \eqref{eq:budgets}\\
$z_r$ & $u_r^{r/m}v_r^{1-r/m}$ for $r<m$, and $z_m=u_m$. & \sect{sec:terminal}\\
$C(B)$, $C_\varepsilon$ & Generic constants depending only on the indicated parameter. These are distinct from the fixed cutoff constant $C_B$. & Throughout\\
\end{longtable}
\endgroup

\clearpage

\section{Introduction}

We were made aware of the paper \cite{I} on September 7. It was published online on August 27. It has the same result with a slightly worse exponent.

\medskip

Bohnenblust--Hille inequality has an old and venerable history. It was first proved in \cite{BohnenblustHille} in 1931 and gave the 
$n$-free estimate of a certain coefficient norm of an analytic  polynomial  of variables $n$ via its maximum on multi-torus $\mathbb T^n$.
The constant was independent of $n$ but dependent in a superexponential fashion on the degree $m$ of the polynomial.
The result was needed to answer a question of Harold Bohr concerning a convergence property of Dirichlet series.

Much later this result was revised and the constant was improved in \cite{DFOOS}, and the constant became exponential in degreee $m$.

A bit later \cite{BayartPellegrinoSeoane-Sep\'ulveda} proved a subexponential in $m$ estimate for multi-torus case.

Very soon \cite{DefantMastyloPerez} proved a similar subexponential estimate for polynomials on Hamming cube.

This latter result was used in a learning theory by Eskenazis--Ivanisvili \cite{EskenazisIvanisvili} 
to obtain an optimal number of queries to PAC (probably approximately correct) learning of function of low degree on Boolean cube.

Below we give a polynomial in degree estimate for the constant of the Bohnenblast--Hille inequality on Boolean cube.

The argument below adapts the weighted graded square-function method introduced in
Pellegrino and Teixeira \cite{PellegrinoTeixeira}  Section 5. There it was introduced  for analytic polynomials.  
There are several new and exciting ideas in \cite{PellegrinoTeixeira}. One of them is a bicoloring of variables, another is a special square function
on coefficients of polynomials. 

On the Boolean cube,
Bonami--Beckner hypercontractivity supplies simultaneous row and column
control after a random bicoloring of the coordinates. Bicoloring idea comes from \cite{PellegrinoTeixeira}, but in the case of Hamming cube it can be realized in a simpler way.
Every monomial on Hamming cube is 
square-free, so the dominant-power compression regime from the analytic
setting disappears.  

In a previous version of this note the random split retains 
a fixed fraction of the
coefficient of monomial of degree $r$  in the central  window $d, e\in [r/4, 3r/4], d+e =r$, where $d, e$ are degrees of a monomial after bicoloring. This  givies a strict contraction that leads to a bootstrapping estimate for the Bohnenblust--Hille constant on Hamming cube.  Only the first five Fourier levels are left outside the
contraction; a Bernstein estimate for their spectral projections costs
$O(m^5)$.  The exponent $9$ was explicit and was deliberately unoptimized.

In this version a considerably more sophisticated square function is used and this allows for the improvement of the constant.

As a consequence, the Aaronson--Ambainis influence conjecture holds for
bounded polynomials whose Fourier coefficients are approximately flat within
each homogeneous level: if the within-level distortion is at most $\Lambda$,
then every coordinate has influence at least
$Var(f)^2/(C^2\Lambda^4 d^{s})$ with relatively small $s$.

In particular, if all Fourier--Walsh coefficients of a polynomial have the same absolute value (the Fourier flat case) then Aaronson--Ambainis influence conjecture holds.
Let us remind that the combination of \cite{ODonnellSaksSchrammServedio} and \cite{NisanSzegedy} 
proves that for real flat function $f$ on Hamming cube (say, Boolean $f$,  $|f|=1$) Aaronson--Ambainis influence conjecture also holds. See also \cite{Lee} for another  proof of this fact.

An appendix also records an exponential
counterexample to direct $L^\infty$ control of a single bidegree slice.

The proof is modeled on the weighted bootstrap in
\cite[Sections 3 and 5]{PellegrinoTeixeira}.  

Its Boolean specialization is based on a careful analysis of of levels $2$ and $3$ BH estimates plus the
variable splitting window changing completely its nature for $r\lesssim \log M$ and for larger values of $r$.
Accordingly the square function is more complicated than in  Pellegrino--Teixeira's~\cite{PellegrinoTeixeira}.



\section{Statement and proof design}
\label{sec:statement}
Write $\Om_n=\{-1,1\}^n$ with normalized counting measure. For
$A\subseteq[n]$, let
\[
 w_A(x)=\prod_{j\in A}x_j,
 \qquad \wh f(A)=\E_{x\in\Om_n}f(x)w_A(x).
\]
Then
\[
 f=\sum_{A\subseteq[n]}\wh f(A)w_A
   =\sum_{r=0}^n f_r,
 \qquad f_r=\sum_{|A|=r}\wh f(A)w_A.
\]
For $r\ge1$, put
\begin{equation}\label{eq:Ar}
 q_r=\frac{2r}{r+1},
 \qquad A_r(f)=\left(\sum_{|A|=r}\abs{\wh f(A)}^{q_r}\right)^{1/q_r}.
\end{equation}
Levels above the dimension or the degree are understood to be zero.
Throughout, $\log$ denotes the natural logarithm, and
$\beta_0=\frac32+\frac1{\log2}$.

\begin{theorem}[Boolean Bohnenblust--Hille bound]\label{thm:main}
For every $\varepsilon>0$ there is a constant $C_\varepsilon<\infty$ such
that, for all integers $m,n\ge1$ and every complex-valued $f:\Om_n\to\C$
of degree at most $m$,
\begin{equation}\label{eq:main}
 \left(\sum_{A\subseteq[n]}\abs{\wh f(A)}^{q_m}\right)^{1/q_m}
 \le C_\varepsilon m^{\beta_0+\varepsilon}\norm f_\infty.
\end{equation}
In particular, there is an absolute constant $C$ such that the left side is
at most $Cm^{2.943}\norm f_\infty$.
\end{theorem}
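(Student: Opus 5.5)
The proof has two stages: a uniform weighted square-function estimate obtained by bootstrap, followed by a terminal interpolation that turns the per-level bounds into the single $\ell^{q_m}$ bound \eqref{eq:main}.

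\textbf{Stage 1: the weighted estimate.} Fix $B>1/\log 2$, so that $\eta_B=B\log2-1>0$. For degree cap $M$ the level weights are
\[
w_{M,r}=M^{\mu_r/r}\,r^B\min\{r,L_M\}^8,\qquad W_M(F)=\Bigl(\sum_{r\le M}A_r(F)^2/w_{M,r}^2\Bigr)^{1/2}.
\]
The goal of this stage is $W_M(F)\le K_B\norm F_\infty$, uniformly in $M$ and $n$.

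The low levels $r=1,2,3$ are handled by hand: Bernstein/Bohnenblust--Hille type arguments give $A_1\le C_1M\norm F_\infty$, $A_2\le C_2M^{3/2}\norm F_\infty$ and $A_3\le C_3M^{5/3}\norm F_\infty$. These match the choice $\mu_r/r$, since $\mu_2/2=1$ and $\mu_3/3=5/3$ are dominated by the stated losses.

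For $r\ge 2$ the bootstrap runs as follows.
\begin{enumerate}
\item Randomly bicolor the coordinates. This gives a two-block function $Q_\omega(x,y)$ with the same supremum norm.
\item For a level-$r$ coefficient split as $(d,e)$ with $d+e=r$, interpolate with weight $\theta=d/r$ between the row norm $X_{d,e}$ and the column norm $Y_{d,e}$ (Lemma~\ref{lem:two-block}).
\item Control $X_{d,e}$ and $Y_{d,e}$ by the level-$d$ and level-$e$ quantities $A_d(Q(\cdot,y))$ and $A_e(Q(x,\cdot))$. The two noise factors are handled simultaneously through Bonami--Beckner (Lemma~\ref{lem:simultaneous}), at the cost $R(d,e)$ for removing them.
\item Average over the splits $d\in D_r$, each carrying its binomial probability $p_r(d)$.
\end{enumerate}
This produces an inequality of the form
\[
W_M(F)^2\le C\norm F_\infty^2+c(M)\,\Gamma_{M,N}(B)^2\norm F_\infty^2 ,
\]
where $c(M)$ is the supremum of the local multipliers $\kappa_M(r,d)$. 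Since $\Gamma_{M,N}$ is finite for finite $N$, we can rearrange to get $\Gamma_{M,N}\le C/\sqrt{1-c(M)}$, uniformly in $N$. This is the abstract contraction of Prop.~\ref{prop:abstract}.

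\textbf{The main obstacle: proving $\sup_M c(M)\le c_B<1$.} The range of $r$ splits into two regimes.
\begin{itemize}
\item \emph{Low range, $r<L_M$.} Use the parity-compatible central window $\Dlow$. Here the ratio $M^{(\mu_d+\mu_e-\mu_r)/r}$ is what matters: because $\mu$ is superadditive up to parity, $\lceil 3d/2\rceil+\lceil 3e/2\rceil\ge\lceil3r/2\rceil$. This is why $\mu_r=\lceil3r/2\rceil$ was chosen. The factor $\sigma$ polynomially beats the window-count and deweighting losses, which gives a bound like $c_*=45/(2\ee^4)$.
\item \emph{High range, $r\ge L_M$.} Use the shrinking balanced window $\Dhigh$. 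The probability mass near $d=r/2$ and the factor $R(d,e)$ combine, via entropy asymptotics $h(t)$ and Stirling, into the leading factor $\ee\,2^{-B}=\exp(-\eta_B)$. The finite-$r$ errors $E_B(r)$ must be shown to be at most $3\eta_B/8$ once $r\ge L_M\ge C_B\log(M+1)$ with $C_B\ge 8/\eta_B$. The $M^{1/r}$-type discrepancies are at most $\exp(O(\log M/L_M))$, so they are absorbed.
\end{itemize}
I expect the delicate uniform bookkeeping of $E_B(r)$ to be the hard part.

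\textbf{Stage 2: terminal interpolation.} Take $M=m$. For each $r\le m$, interpolate $\ell^{q_m}$ between $\ell^{q_r}$ and $\ell^2$. Since
\[
\frac1{q_m}=\frac{r}{m}\cdot\frac1{q_r}+\Bigl(1-\frac rm\Bigr)\cdot\frac12,
\]
Hölder gives
\[
\norm{\wh f_r}_{q_m}\le A_r^{r/m}\,E_r^{1-r/m},\qquad E_r=\norm{f_r}_2 .
\]
With the normalized budgets $u_r$ and $v_r$, each having $\sum u_r^2\le1$ and $\sum v_r^2\le1$, this yields
\[
\norm{\wh f}_{q_m}\le K_B\norm f_\infty\sum_r (w_{m,r})^{r/m}z_r .
\]
Now bound the weight factor:
\[
(w_{m,r})^{r/m}\le m^{\mu_r/m}\,(m^{B+8})^{r/m}.
\]
For $r\le m$ we have $\mu_r/m\le 3/2+1/m$, so this factor is at most $Cm^{3/2+B+o(1)}$ uniformly. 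The sum $\sum_r z_r$ is at most $\sum_r\bigl(r/m\,u_r+(1-r/m)v_r\bigr)\lesssim\sqrt m$ by Cauchy--Schwarz.

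That crude count gives $3/2+B+1/2+\dots$, so I would refine it. Split at $r\le m/\log m$, where $(m^{B+8})^{r/m}=O(\ee^{(B+8)})$ is harmless, versus larger $r$. Using the actual $L_m$-saturation $\min\{r,L_m\}^8\le(\log m)^8$ and $r^B\le m^B$ gives $m^{3/2+B}(\log m)^{O(1)}$. Letting $B\downarrow1/\log2$ then yields $m^{\beta_0+\varepsilon}$. Checking that the $\sqrt m$ from Cauchy--Schwarz is genuinely absent, by using $\ell^2$-budgets weighted by $r/m$, is a secondary point I would verify carefully.
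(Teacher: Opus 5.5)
The genuine gap is in Stage~2. You bound $\norm{\wh f}_{q_m}$ by $\sum_r\norm{\wh f_r}_{q_m}$, using the triangle inequality over levels with $\wh f_r$ the level-$r$ coefficient vector. You then bound $\sum_r z_r$ by Cauchy--Schwarz, which costs $\sqrt m$. Your refinement does not recover this, and no reweighting of the $\ell^2$ budgets can. The budgets $\sum_r u_r^2\le1$ and $\sum_r v_r^2\le1$ allow $u_r=v_r=s^{-1/2}$ on the top $s\asymp m/\log m$ levels $m-s<r\le m$. On those levels $w_{m,r}^{r/m}\gtrsim m^{3/2+B}$ still holds, since raising $m^{3/2+B}$ to a power $r/m\ge1-1/\log m$ loses only a constant. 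So with only this information, $\sum_r w_{m,r}^{r/m}z_r$ can be of order $m^{3/2+B}\sqrt{m/\log m}$. Your route therefore gives exponent $2+1/\log2$ up to logarithms, not $\beta_0$. Splitting off $r\le m/\log m$ only handles levels that were never the problem.

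The missing observation is that the Fourier levels are disjoint blocks of the coefficient vector, so no triangle inequality is needed. With $\norm{\wh f_r}_{q_m}\le K_B w_{m,r}^{r/m}\norm f_\infty z_r$ (take $K_B\ge1$),
\[
 \Bigl(\sum_{1\le|A|\le m}\abs{\wh f(A)}^{q_m}\Bigr)^{1/q_m}
 =\Bigl(\sum_{r=1}^m\norm{\wh f_r}_{q_m}^{q_m}\Bigr)^{1/q_m}
 \le K_B\norm f_\infty\max_{1\le r\le m}w_{m,r}^{r/m}\,\norm z_{q_m}.
\]
Since $1/q_m-1/2=1/(2m)$, H\"older on $m$ terms costs only $m^{1/(2m)}\le\ee^{1/(2\ee)}$. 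Using $\sum_r z_r^2\le2$ (weighted AM--GM applied to $z_r^2$), this gives $\norm z_{q_m}\le\sqrt2\,\ee^{1/(2\ee)}$. Combine this with $\max_r w_{m,r}^{r/m}\le C(B)m^{3/2+B}(\log(m+1))^8$ and let $B\downarrow1/\log2$; this is the paper's terminal step. That weight bound needs the saturation $\sigma_{m,r}\le L_m^8$ from the start. Your first bound via $m^{B+8}$ gives $m^{3/2+B+8}$ at $r=m$, not $m^{3/2+B+o(1)}$.

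Your Stage~1 matches the paper's bootstrap, apart from a few slips:
\begin{itemize}
\item $\mu_2/2=3/2$, not $1$.
\item The contraction acts only on $r\ge4$, since $\rho_1=0$ forces $d,e\ge2$.
\item The low range needs the defect $\mu_d+\mu_e-\mu_r$ to vanish, which the parity restriction in $\Dlow$ enforces (Lemma~\ref{lem:defect}). The inequality $\lceil3d/2\rceil+\lceil3e/2\rceil\ge\lceil3r/2\rceil$ you cite points the wrong way: it says the factor $M^{(\mu_d+\mu_e-\mu_r)/r}$ is at least $1$.
\end{itemize}
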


The proof retains the random-bicoloring and simultaneous row/column
estimates of the earlier Boolean-cube draft~\cite{earlier}, modeled on the
weighted analytic bootstrap of Pellegrino--Teixeira~\cite{PellegrinoTeixeira}. These
references provide context; the estimates from the earlier draft needed
below are proved here. For the established Boolean-cube setting, see
also~\cite{DefantMastyloPerez}.

The main change is the weight. For a polynomial of degree at most $M$, use
\begin{equation}\label{eq:weight-intro}
 w_{M,r}=M^{\mu_r/r}r^B\min\{r,L_M\}^8,
 \qquad \mu_1=1,\qquad
 \mu_r=\left\lceil\frac{3r}{2}\right\rceil\quad(r\ge2),
\end{equation}
where $B>1/\log2$ and $L_M\asymp_B\log(M+1)$. The associated square
function is
\begin{equation}\label{eq:WM}
 W_M(F)=\left(\sum_{r=1}^M\frac{A_r(F)^2}{w_{M,r}^2}\right)^{1/2}.
\end{equation}

\begin{remark}
The definition of $ \mathcal W_B(F)$ seems overcomplicated on the first glance. But in fact, its form has a simple explanation.
This grading in  homogeneity $r$ and keeping all $r\in [1, M]$ together is inevitable by the following reason.  
As we cannot claim that $Q_{\omega, d, e}$ obtained from bounded $F$ has a nice $L^\infty$  norm for fixed $d, e$ (see Addendum A for this claim), 
one is forced to consider the whole $Q_{\omega}$ as the source of nice norm, but $Q_\omega $
has of course many homogeneity modes; 
so one should invent a method to estimate all modes simultaneously—which explains the square function $W_B(F)$ in formula \eqref{eq:WM}
 as a vehicle for 
subsequent estimates. 
\end{remark}

The proof has five parts.
\begin{enumerate}
\item Fixed-level estimates give
$A_1(F)\lesssim M\norm F_\infty$,
$A_2(F)\lesssim M^{3/2}\norm F_\infty$, and
$A_3(F)\lesssim M^{5/3}\norm F_\infty$.
\item The seed exponents are encoded by $\mu_r$. For $d,e\ge2$, its
additivity defect $\mu_d+\mu_e-\mu_{d+e}$ is either $0$ or $1$.
\item For $4\le r<L_M$, a parity-compatible central split makes the defect
zero. The unsaturated auxiliary factor $r^8$ supplies extra entropy gain.
\item For $r\ge L_M$, all near-balanced splits are retained. The possible
defect costs at most $M^{1/r}\le\exp(1/C_B)$, with $C_B$ chosen large
depending only on $B$. The remaining finite-$r$ errors vanish as
$r\to\infty$, leaving the leading factor $\ee\,2^{-B}<1$.
\item A single uniform contraction is absorbed in $W_M$. Terminal
interpolation converts this weighted estimate into the BH inequality,
with a factor $m^{3/2+B}$ up to logarithms.
\end{enumerate}

\section{Noise and fixed-level estimates}\label{sec:noise}
For $-1\le\rho\le1$, define
\[
 T_\rho f=\sum_{A\subseteq[n]}\rho^{|A|}\wh f(A)w_A,
\]
with the constant coefficient unchanged, including when $\rho=0$.
For $0\le\rho\le1$, this is a positive averaging operator, hence
\begin{equation}\label{eq:noise-infty}
 \norm{T_\rho f}_\infty\le\norm f_\infty.
\end{equation}
We use the Bonami--Beckner inequality in the following form
\cite{Bonami,ODonnell}.
\begin{theorem}[Bonami--Beckner]\label{thm:BB}
Let $1\le p\le2$ and $0\le\rho\le\sqrt{p-1}$. Then
\[
 \norm{T_\rho g}_2\le\norm g_p
\]
for every complex-valued function on a Boolean cube.
\end{theorem}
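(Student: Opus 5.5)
The statement just made is the Bonami--Beckner inequality for complex functions on $\Om_n$, for $1\le p\le2$ and $0\le\rho\le\sqrt{p-1}$. The standard route is tensorization. We first prove the one-dimensional two-point inequality, then induct on the dimension $n$ using Minkowski's integral inequality, and finally remove the restriction to real-valued or nonnegative functions.

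\emph{Reduction to the extremal $\rho$.} Since $\norm{T_\rho g}_2^2=\sum_A\rho^{2|A|}\abs{\wh g(A)}^2$ is nondecreasing in $\rho\in[0,1]$, it suffices to treat $\rho=\sqrt{p-1}$.

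\emph{One coordinate.} Write $g(x)=a+bx$ with $a,b\in\C$, so the claim is
\[
 \abs a^2+(p-1)\abs b^2\le\Bigl(\tfrac12\abs{a+b}^p+\tfrac12\abs{a-b}^p\Bigr)^{2/p}.
\]
For real $a,b$, normalize $a=1$, $b=t$. Then the claim becomes the classical two-point inequality
\[
 \Bigl(\tfrac{\abs{1+t}^p+\abs{1-t}^p}2\Bigr)^{1/p}\ge(1+(p-1)t^2)^{1/2}.
\]
For $\abs t\le1$ one proves this by expanding both sides in power series (binomial series) and comparing coefficients term by term. The case $\abs t>1$ reduces to $\abs t\le1$ by the symmetry $t\mapsto1/t$ together with $p\le2$.

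\emph{Complex values.} These are the main obstacle, since the naive real proof does not cover $a,b\in\C$. The fix is to rotate so that $a\ge0$ and write $b=b_1+ib_2$. Then observe that $\abs{a+b}^2=a^2+2ab_1+\abs b^2$, so the right side depends only on $a$, $b_1$, and $\abs b$. One then checks that the worst case is $b_2=0$: for fixed $\abs b$, the map $b_1\mapsto\frac12(\abs{a+b}^p+\abs{a-b}^p)$ is even and convex in $b_1$ when $p\ge1$, so it is minimized at $b_1=0$. At $b_1=0$ a direct comparison works, because $(a^2+\abs b^2)\ge a^2+(p-1)\abs b^2$.

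More carefully: at $b_1=0$ the right side equals $a^2+\abs b^2$, which dominates the left side. For general $b_1$, convexity moves the right side upward, so the bound holds throughout. Alternatively one can cite \cite{ODonnell}, where the complex (or Hilbert-valued) two-point inequality for $p\le2$ is derived from the real one via the $(2,q)$ dual form.

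\emph{Induction on $n$.} Write $g(x,x_n)=g_0(x)+x_ng_1(x)$ and $T_\rho=T_\rho^{(n-1)}\otimes T_\rho^{(1)}$. Let $h=T^{(n-1)}_\rho g$. By Fubini,
\[
 \norm{T_\rho g}_2^2=\E_x\,\E_{x_n}\abs{T^{(1)}_\rho h(x,\cdot)}^2 .
\]
We bound this in three steps.
\begin{enumerate}
\item The one-dimensional case, applied for each fixed $x$, gives $\E_{x_n}\abs{T^{(1)}_\rho h(x,\cdot)}^2\le\norm{h(x,\cdot)}_{L^p(x_n)}^2$.
\item Minkowski's integral inequality in $L^{2/p}$, valid since $2/p\ge1$, lets us exchange the order of integration. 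This yields $\E_x\norm{h(x,\cdot)}_{L^p(x_n)}^2\le\bigl(\E_{x_n}\norm{T^{(n-1)}_\rho g(\cdot,x_n)}_{L^2(x)}^p\bigr)^{2/p}$.
\item The induction hypothesis, applied for each fixed $x_n$, gives $\norm{T^{(n-1)}_\rho g(\cdot,x_n)}_{L^2(x)}\le\norm{g(\cdot,x_n)}_{L^p(x)}$.
\end{enumerate}
Chaining these three bounds gives $\norm{T_\rho g}_2\le\norm g_p$. The constant coefficient is untouched throughout, consistent with the convention for $T_\rho$ adopted above.
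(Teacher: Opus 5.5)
\textbf{The gap is in the complex one-dimensional step.} The rest of your tensorization is fine: the monotonicity reduction to $\rho=\sqrt{p-1}$, the real two-point inequality (the $t\mapsto1/t$ reduction works because $(2-p)(t^2-1)\ge0$), and the induction via Minkowski in $L^{2/p}$.

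The complex step fails as written. With $a\ge0$ and $c=a^2+\abs b^2$, the quantity inside the right side is
\[
 \psi(b_1)=\tfrac12\bigl[(c+2ab_1)^{p/2}+(c-2ab_1)^{p/2}\bigr].
\]
Since $p/2\le1$, the map $s\mapsto(c+s)^{p/2}$ is concave, not convex. So $\psi$ is even and concave on $[-\abs b,\abs b]$, and $b_1=0$ is its \emph{maximum}. Your ``direct comparison'' $a^2+\abs b^2\ge a^2+(p-1)\abs b^2$ at $b_1=0$ therefore says nothing about other values of $b_1$. If your reasoning were valid, it would give $\tfrac12(\abs{a+b}^p+\abs{a-b}^p)\ge(a^2+\abs b^2)^{p/2}$ for every $b$, including real $b$. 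That fails for $p=1$, $a=b=1$, where it reads $1\ge\sqrt2$. Your sentence ``the worst case is $b_2=0$'' is actually the right conclusion. But it contradicts the justification you give, which puts the minimum at $b_1=0$, i.e.\ at purely imaginary $b$.

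\textbf{Two repairs.} The first stays inside your framework. Concavity and evenness put the minimum of $\psi$ at the endpoints $b_1=\pm\abs b$, i.e.\ at real $b$. The left side $a^2+(p-1)\abs b^2$ depends only on $\abs b$, so the complex case follows from the real two-point inequality you already proved.

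The second is what the paper does. It does not prove Bonami--Beckner; it cites it for real functions and passes to complex ones by positivity. For $0\le\rho\le1$, $T_\rho$ averages against a probability kernel, so $\abs{T_\rho g}\le T_\rho\abs g$ pointwise. Hence
\[
 \norm{T_\rho g}_2\le\norm{T_\rho\abs g}_2\le\norm{\abs g}_p=\norm g_p .
\]
This reduces the full $n$-dimensional statement to nonnegative functions in one line. Your induction then needs only the real two-point inequality, and the complex one-dimensional analysis can be dropped.

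\textbf{A minor point in the real case.} For $\abs t\le1$, the argument is not a literal term-by-term coefficient comparison, because the series of $(1+(p-1)t^2)^{p/2}$ has coefficients of both signs. Instead:
\begin{enumerate}
\item bound $(1+(p-1)t^2)^{p/2}$ above by $1+\binom p2t^2$, using $(1+x)^{p/2}\le1+\tfrac p2x$;
\item bound $\tfrac12(\abs{1+t}^p+\abs{1-t}^p)=\sum_k\binom p{2k}t^{2k}$ below by its first two terms, using $\binom p{2k}\ge0$ for $1\le p\le2$.
\end{enumerate}
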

The complex-valued statement follows from the real one by positivity:
$|T_\rho g|\le T_\rho|g|$ pointwise. We first isolate a fixed-level
consequence.

\begin{lemma}[Degree-bounded $L^1$--$L^2$ estimate]\label{lem:L1L2}
If $g$ has Fourier degree at most an integer $K\ge0$, then
\begin{equation}\label{eq:L1L2}
 \norm g_2\le\ee^K\norm g_1.
\end{equation}
Moreover, for $1\le j\le K$,
\begin{equation}\label{eq:fixed-level}
 \norm{g_j}_2\le\left(\frac{4\ee K}{j}\right)^{j/2}\norm g_1.
\end{equation}
\end{lemma}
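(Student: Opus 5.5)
The plan is to combine Bonami--Beckner (Theorem~\ref{thm:BB}) with an exponent $p$ strictly between $1$ and $2$, Hölder interpolation between $L^1$ and $L^2$, and the fact that $T_\rho$ is invertible on degree-$\le K$ functions with controlled loss. We may assume $g\neq0$. Fix $s\in(0,1)$, put $p=1+s$ and $\rho=\sqrt s=\sqrt{p-1}$. The Hölder (log-convexity) inequality gives
\[
\norm g_p\le \norm g_1^{1-t}\norm g_2^{t},\qquad \frac1p=(1-t)+\frac t2,\quad\text{i.e. } t=\frac{2(p-1)}{p}=\frac{2s}{1+s}.
\]

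\textbf{Step 1 (the $L^1$--$L^2$ bound \eqref{eq:L1L2}).} Since $g$ has degree at most $K$ and $0<\rho\le1$, Parseval gives $\norm g_2\le\rho^{-K}\norm{T_\rho g}_2$. By Theorem~\ref{thm:BB} this is at most $\rho^{-K}\norm g_p$. Inserting the Hölder bound and dividing by $\norm g_2^{t}$ (finite and nonzero) yields
\[
\norm g_2\le \rho^{-K/(1-t)}\norm g_1=\exp\!\Bigl(K\cdot\tfrac{1+s}{2(1-s)}\log\tfrac1s\Bigr)\norm g_1 .
\]
As $s\to1^-$ we have $\log(1/s)\sim 1-s$, so the exponent tends to $K\cdot\frac{2}{2}=K$. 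Because the inequality holds for every $s\in(0,1)$, letting $s\to1$ gives $\norm g_2\le \ee^K\norm g_1$.

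The only delicate point is that the prefactor is not attained at any particular $s$. Numerically, $\frac{1+s}{2(1-s)}\log\frac1s$ decreases to $1$ from above. So the constant $\ee^K$ is obtained only as a limit, which is legitimate because the left-hand side does not depend on $s$.

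\textbf{Step 2 (fixed level \eqref{eq:fixed-level}).} For a single level, pay the noise loss only on level $j$, not $K$. Since $T_\rho$ acts diagonally, $\norm{g_j}_2=\rho^{-j}\norm{(T_\rho g)_j}_2\le \rho^{-j}\norm{T_\rho g}_2\le\rho^{-j}\norm g_p$. Then use Hölder together with Step~1:
\[
\norm g_p\le\norm g_1^{1-t}\norm g_2^t\le \norm g_1^{1-t}\,(\ee^{K}\norm g_1)^{t}=\ee^{Kt}\norm g_1 .
\]
Combining, and using $t=\frac{2s}{1+s}\le 2s$, gives $\norm{g_j}_2\le s^{-j/2}\ee^{2Ks}\norm g_1$.

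Now choose $s=j/(4K)$, which lies in $(0,1/4]$ because $1\le j\le K$. This gives $s^{-j/2}\ee^{2Ks}=(4K/j)^{j/2}\ee^{j/2}=(4\ee K/j)^{j/2}$, which is \eqref{eq:fixed-level}.

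The main obstacle is Step~1. Theorem~\ref{thm:BB} is vacuous at $p=1$, since there $\rho=0$. The argument therefore needs the self-improving trick of absorbing $\norm g_2^t$ into the left-hand side, followed by optimizing $s$; only the limit $s\to1$ produces exactly $\ee^K$.
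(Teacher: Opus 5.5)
Your proof is correct and follows the paper's argument essentially verbatim: Bonami--Beckner at $p=1+s$ is combined with H\"older interpolation between $L^1$ and $L^2$, the factor $\norm g_2^{t}$ is absorbed, and you let $p\uparrow2$ (your $1-t$ is the paper's $\vartheta=(2-p)/p$); the level-$j$ bound with $s=j/(4K)$ is then the same as the paper's. Your monotonicity claim for $\frac{1+s}{2(1-s)}\log\frac1s$ is true but unnecessary, since only the limit $s\to1^-$ is used.
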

\begin{proof}
The zero function and the case $K=0$ are immediate. Fix $1<p<2$ and put
$\rho=\sqrt{p-1}$. Since $g$ has degree at most $K$, Parseval gives
\[
 \norm{T_\rho g}_2\ge\rho^K\norm g_2.
\]
Theorem~\ref{thm:BB} therefore yields
\[
 \norm g_2\le(p-1)^{-K/2}\norm g_p.
\]
Let $\vartheta=(2-p)/p$. Interpolation between $L^1$ and $L^2$ gives
\[
 \norm g_p\le\norm g_1^{\vartheta}\norm g_2^{1-\vartheta}.
\]
After rearranging,
\[
 \frac{\norm g_2}{\norm g_1}\le(p-1)^{-K/(2\vartheta)}.
\]
As $p\uparrow2$,
\[
 -\frac{1}{2\vartheta}\log(p-1)
 =-\frac{p}{2(2-p)}\log(p-1)\longrightarrow1,
\]
which proves~\eqref{eq:L1L2}.

For the level estimate, let $0<s\le1$. Parseval and
Theorem~\ref{thm:BB} give
\[
 s^{j/2}\norm{g_j}_2\le\norm{T_{\sqrt s}g}_2\le\norm g_{1+s}.
\]
Interpolation and~\eqref{eq:L1L2} imply
\[
 \norm g_{1+s}
 \le\norm g_1^{(1-s)/(1+s)}\norm g_2^{2s/(1+s)}
 \le\exp\left(\frac{2Ks}{1+s}\right)\norm g_1.
\]
Choose $s=j/(4K)$. Then $s\le1/4$ and $2Ks/(1+s)\le j/2$, so
\[
 \norm{g_j}_2\le\left(\frac{4K}{j}\right)^{j/2}\ee^{j/2}\norm g_1,
\]
which is~\eqref{eq:fixed-level}.
\end{proof}

\begin{lemma}[First-level coefficient estimate]\label{lem:first}
There is an absolute constant $C_1$ such that every complex-valued $F$ of
degree at most $M\ge1$ satisfies
\begin{equation}\label{eq:first}
 A_1(F)\le C_1M\norm F_\infty.
\end{equation}
One may take $C_1=\pi/2$.
\end{lemma}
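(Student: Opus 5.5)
Since $q_1=1$, the quantity $A_1(F)=\sum_{j=1}^n\abs{\wh F(\{j\})}$ is a plain $\ell^1$ sum. The plan is to dualize it: choose unimodular $\lambda_j\in\C$ with $\lambda_j\wh F(\{j\})=\abs{\wh F(\{j\})}$, so that
\[
 A_1(F)=\sum_j\lambda_j\wh F(\{j\}),
\]
and then bound this sum by a directional derivative of a one-variable restriction of $F$.

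\emph{Reduction to one variable.} First reduce to real-valued data. The map $F\mapsto\sum_j\lambda_j\wh F(\{j\})$ is linear. Fix $x\in\Om_n$ and $t\in[-1,1]$, and let $\mathbb P_{x,t}$ be the product measure on $\Om_n$ under which the $j$-th coordinate has mean $t x_j$. Set
\[
 \phi_x(t)=\E_{\mathbb P_{x,t}}F=\sum_A\wh F(A)\,t^{|A|}w_A(x)=(T_tF)(x).
\]
This is a polynomial in $t$ of degree at most $M$, and $\abs{\phi_x(t)}\le\norm F_\infty$ for $t\in[-1,1]$ by \eqref{eq:noise-infty} applied with $\rho=\abs t$ (negative $t$ is handled by flipping $x$). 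Its derivative at $0$ is
\[
 \phi_x'(0)=\sum_j\wh F(\{j\})x_j.
\]
So the level-one linear form appears as the derivative of a bounded algebraic polynomial on $[-1,1]$.

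\emph{Handling complex phases.} The $\lambda_j$ are complex, while $x_j=\pm1$, so $\phi_x'(0)$ alone does not produce $\sum_j\lambda_j\wh F(\{j\})$. There are two options.

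\begin{itemize}
\item[(a)] Split $\lambda_j$ into real and imaginary parts. Each real sign pattern is realized by choosing $x_j=\operatorname{sign}$. This costs a factor $2$, and after separating $\wh F=\Re\wh F+i\Im\wh F$ perhaps up to $4$, which is compatible with an absolute $C_1$ but not with $\pi/2$.
\item[(b)] To get $\pi/2$, average over phases. Write $\lambda_j=\ee^{i\alpha_j}$. Consider $\psi(\theta)=\sum_j\wh F(\{j\})\cos(\theta-\alpha_j)$, or, more simply, pick $x_j=\operatorname{sign}\cos(\theta-\alpha_j)$ with $\theta$ uniform on $[0,2\pi)$. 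Then
\[
 \E_\theta\, x_j\ee^{-i\theta}=\frac{2}{\pi}\ee^{-i\alpha_j}\cdot\text{(unit factor)}.
\]
Hence $\sum_j\wh F(\{j\})\bar\lambda_j\cdot\frac{2}{\pi}=\E_\theta\,\ee^{-i\theta}\phi_{x(\theta)}'(0)$ (conjugation conventions are to be fixed). This gives
\[
 A_1(F)\le\frac{\pi}{2}\sup_x\abs{\phi_x'(0)}.
\]
\end{itemize}

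\emph{Markov--Bernstein step.} It remains to bound $\abs{\phi'(0)}$ for a complex polynomial $\phi$ of degree $\le M$ with $\sup_{[-1,1]}\abs\phi\le\norm F_\infty$. Bernstein's inequality at interior points gives
\[
 \abs{\phi'(t)}\le\frac{M}{\sqrt{1-t^2}}\norm\phi_{[-1,1]},
\]
so $\abs{\phi'(0)}\le M\norm F_\infty$. For complex $\phi$, apply the real inequality to $\Re(\ee^{i\gamma}\phi)$ with $\gamma$ chosen so that $\ee^{i\gamma}\phi'(0)$ is real. Combining the three steps yields $A_1(F)\le\frac{\pi}{2}M\norm F_\infty$.

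The main obstacle is bookkeeping in the phase-averaging step: one has to check that the $\frac{2}{\pi}$ constant arises exactly, namely $\E_\theta\operatorname{sign}(\cos(\theta-\alpha))\cos(\theta-\alpha)=2/\pi$, and that real parts are taken consistently. If this fails, the fallback (a) still gives an absolute constant $C_1\le4$, which is all that the later bootstrap requires.
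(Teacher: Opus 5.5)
Your route (b) is the paper's proof. The paper also proves $\norm{F_1}_\infty\le M\norm F_\infty$ by applying Bernstein to $t\mapsto T_tF(x)$, written in trigonometric form $p_x(\cos\theta)$ rather than via your interior algebraic Bernstein plus phase rotation. It then takes $x_j(t)=\operatorname{sign}\operatorname{Re}\bigl(\ee^{-it}\wh F(\{j\})\bigr)$, which is your $\operatorname{sign}\cos(t-\alpha_j)$ with $\alpha_j=\arg\wh F(\{j\})$, and averages in $t$ to get the factor $2/\pi$. With that choice of $\alpha_j$ (instead of $\lambda_j=\ee^{i\alpha_j}$), one has exactly $\E_\theta\,\ee^{-i\theta}F_1(x(\theta))=\frac{2}{\pi}A_1(F)$, with no extra unit factor. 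This resolves the conjugation issue you flagged and gives $C_1=\pi/2$.
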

\begin{proof}
Fix $x\in\Om_n$ and set
\[
 p_x(t)=T_tF(x)=\sum_{r=0}^M t^rF_r(x),\qquad -1\le t\le1.
\]
Equation~\eqref{eq:noise-infty} and the identity
$T_tF(x)=T_{|t|}F(-x)$ for $t<0$ give
$\norm{p_x}_{L^\infty[-1,1]}\le\norm F_\infty$.
The trigonometric polynomial $g_x(\theta)=p_x(\cos\theta)$ has degree at
most $M$. Bernstein's inequality, valid also for complex trigonometric
polynomials~\cite[Theorem 3.8]{QZ}, gives
\[
 |p_x'(0)|=|g_x'(\pi/2)|\le M\norm F_\infty.
\]
Since $p_x'(0)=F_1(x)$,
\begin{equation}\label{eq:first-sup}
 \norm{F_1}_\infty\le M\norm F_\infty.
\end{equation}
Write $F_1(x)=\sum_j a_jx_j$. For each $t\in[0,2\pi]$, choose signs
$x_j(t)$ so that
$x_j(t)\operatorname{Re}(\ee^{-it}a_j)=|\operatorname{Re}(\ee^{-it}a_j)|$.
Then
\[
 \sum_j|\operatorname{Re}(\ee^{-it}a_j)|
 =\operatorname{Re}\bigl(\ee^{-it}F_1(x(t))\bigr)
 \le\norm{F_1}_\infty.
\]
Averaging in $t$ and using
\[
 \frac{1}{2\pi}\int_0^{2\pi}|\operatorname{Re}(\ee^{-it}a)|\,dt
 =\frac{2|a|}{\pi}
\]
yields $\sum_j|a_j|\le(\pi/2)\norm{F_1}_\infty$.
Combine this with~\eqref{eq:first-sup}.
\end{proof}

\section{Mixed norms and the first three levels}\label{sec:mixed}
For finite sets $I,J$ and a scalar array $a=(a_{ij})_{I\times J}$, use
\[
 \norm a_{\ell^p(I;\ell^q(J))}
 =\left[\sum_{i\in I}\left(\sum_{j\in J}|a_{ij}|^q\right)^{p/q}\right]^{1/p}.
\]
\begin{lemma}[Finite mixed-norm interpolation]\label{lem:interpolation}
Let $1\le p_0,p_1,q_0,q_1<\infty$ and $0\le\eta\le1$. If
\[
 \frac1p=\frac{1-\eta}{p_0}+\frac{\eta}{p_1},
 \qquad \frac1q=\frac{1-\eta}{q_0}+\frac{\eta}{q_1},
\]
then
\[
 \norm a_{\ell^p(I;\ell^q(J))}
 \le\norm a_{\ell^{p_0}(I;\ell^{q_0}(J))}^{1-\eta}
     \norm a_{\ell^{p_1}(I;\ell^{q_1}(J))}^{\eta}.
\]
\end{lemma}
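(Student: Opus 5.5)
We will prove Lemma~\ref{lem:interpolation} by two applications of Hölder's inequality: first inside each row, then across rows. The exponents are chosen so that the Hölder conjugates come out exactly right. Fix $i\in I$ and write $a_i=(a_{ij})_{j\in J}$. The endpoint cases $\eta=0$ and $\eta=1$ are trivial, so assume $0<\eta<1$.

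\emph{Inner step.} Split $|a_{ij}|^q=|a_{ij}|^{(1-\eta)q}\,|a_{ij}|^{\eta q}$ and apply Hölder in $j$ with exponents $s=q_0/((1-\eta)q)$ and $t=q_1/(\eta q)$. The relation $\frac1q=\frac{1-\eta}{q_0}+\frac{\eta}{q_1}$ says exactly that $\frac1s+\frac1t=1$. This gives the pointwise-in-$i$ log-convexity
\[
 \norm{a_i}_{\ell^q(J)}\le\norm{a_i}_{\ell^{q_0}(J)}^{1-\eta}\norm{a_i}_{\ell^{q_1}(J)}^{\eta}.
\]

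\emph{Outer step.} Set $\alpha_i=\norm{a_i}_{\ell^{q_0}}$ and $\beta_i=\norm{a_i}_{\ell^{q_1}}$. Raise the inner bound to the power $p$ and sum over $i$:
\[
 \sum_i\norm{a_i}_{\ell^q}^p\le\sum_i\alpha_i^{(1-\eta)p}\beta_i^{\eta p}.
\]
Apply Hölder in $i$ with exponents $p_0/((1-\eta)p)$ and $p_1/(\eta p)$; these are conjugate by the relation for $1/p$. The result is
\[
 \Bigl(\sum_i\alpha_i^{p_0}\Bigr)^{(1-\eta)p/p_0}\Bigl(\sum_i\beta_i^{p_1}\Bigr)^{\eta p/p_1}.
\]
Taking $p$-th roots gives the claimed inequality, since $\bigl(\sum_i\alpha_i^{p_0}\bigr)^{1/p_0}$ and $\bigl(\sum_i\beta_i^{p_1}\bigr)^{1/p_1}$ are precisely the two mixed norms on the right-hand side.

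There is no real obstacle here. The only point needing care is that all the Hölder exponents lie in $[1,\infty]$. This holds because $(1-\eta)q/q_0$ and $\eta q/q_1$ are nonnegative and sum to $1$, and likewise for $p$. Since $I$ and $J$ are finite, there are no convergence issues. Zero rows or zero norms cause no trouble either, because each inequality is between nonnegative quantities and remains valid in degenerate cases.
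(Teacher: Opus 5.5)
Your proposal is correct and follows the paper's proof exactly: log-convexity of the inner $\ell^q$ norm for each fixed $i$ via Hölder in $j$, then raising to the $p$th power, summing in $i$, and applying Hölder with the conjugate exponents $p_0/((1-\eta)p)$ and $p_1/(\eta p)$. You have simply written out in full the steps the paper states in one line.
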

\begin{proof}
The endpoints $\eta=0,1$ are immediate. Otherwise apply log-convexity of
the inner $\ell^q$ norm for each $i$, raise to the $p$th power, sum in $i$,
and use H\"older with conjugate exponents
$p_0/[p(1-\eta)]$ and $p_1/(p\eta)$.
\end{proof}

\begin{lemma}[Two-block coefficient inequality]\label{lem:two-block}
Let $d,e\ge1$, $r=d+e$, and let $a=(a_{A,B})$ be a finite scalar array.
Define
\[
 X_{d,e}=\left[\sum_A\left(\sum_B|a_{A,B}|^2\right)^{q_d/2}\right]^{1/q_d},
 \qquad
 Y_{d,e}=\left[\sum_B\left(\sum_A|a_{A,B}|^2\right)^{q_e/2}\right]^{1/q_e}.
\]
Then, with $\theta=d/r$,
\begin{equation}\label{eq:two-block}
 \left(\sum_{A,B}|a_{A,B}|^{q_r}\right)^{1/q_r}
 \le X_{d,e}^{\theta}Y_{d,e}^{1-\theta}.
\end{equation}
\end{lemma}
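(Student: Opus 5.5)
The plan is to obtain \eqref{eq:two-block} as a single application of Lemma~\ref{lem:interpolation}, after first rewriting the column norm $Y_{d,e}$ so that both quantities are mixed norms over the \emph{same} ordering of indices: outer index $A$, inner index $B$.

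\textbf{Step 1 (Minkowski swap).} Since $e\ge1$, we have $1\le q_e\le2$, so $2/q_e\ge1$. Minkowski's inequality in $\ell^{2/q_e}$, applied to the nonnegative array $|a_{A,B}|^{q_e}$, gives
\[
 \left[\sum_A\left(\sum_B|a_{A,B}|^{q_e}\right)^{2/q_e}\right]^{1/2}
 \le\left[\sum_B\left(\sum_A|a_{A,B}|^{2}\right)^{q_e/2}\right]^{1/q_e}=Y_{d,e}.
\]
In the notation of \sect{sec:mixed} this reads $\norm a_{\ell^2(A;\ell^{q_e}(B))}\le Y_{d,e}$. By definition, $X_{d,e}=\norm a_{\ell^{q_d}(A;\ell^2(B))}$.

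\textbf{Step 2 (interpolation).} Apply Lemma~\ref{lem:interpolation} with
\[
 (p_0,q_0)=(q_d,2),\qquad (p_1,q_1)=(2,q_e),\qquad 1-\eta=\theta=d/r.
\]
All exponents lie in $[1,2]$ because $d,e\ge1$. The exponent check is routine.

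For the outer exponent,
\[
 \frac{\theta}{q_d}+\frac{1-\theta}{2}=\frac{d}{r}\cdot\frac{d+1}{2d}+\frac{e}{2r}=\frac{r+1}{2r}=\frac1{q_r}.
\]
For the inner exponent,
\[
 \frac{\theta}{2}+\frac{1-\theta}{q_e}=\frac{d}{2r}+\frac{e}{r}\cdot\frac{e+1}{2e}=\frac{r+1}{2r}=\frac{1}{q_r}.
\]
Hence the interpolated space is $\ell^{q_r}(A;\ell^{q_r}(B))$, which is exactly the full coefficient norm $\bigl(\sum_{A,B}|a_{A,B}|^{q_r}\bigr)^{1/q_r}$. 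Lemma~\ref{lem:interpolation} then bounds this norm by
\[
 \norm a_{\ell^{q_d}(A;\ell^2(B))}^{\theta}\,\norm a_{\ell^{2}(A;\ell^{q_e}(B))}^{1-\theta}.
\]

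\textbf{Step 3 (conclusion).} Combining with Step 1 yields the bound $X_{d,e}^{\theta}Y_{d,e}^{1-\theta}$, which is \eqref{eq:two-block}.

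The only point requiring care is the direction of the Minkowski swap in Step 1. It goes the right way precisely because the outer exponent $q_e$ of $Y_{d,e}$ is at most the inner exponent $2$. Everything else is the exponent bookkeeping in Step 2.
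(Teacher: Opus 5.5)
Your proposal is correct and is essentially the paper's own proof. Both apply Lemma~\ref{lem:interpolation} with $(p_0,q_0)=(q_d,2)$, $(p_1,q_1)=(2,q_e)$, $\eta=e/r$, and then use the Minkowski swap $\norm{a}_{\ell^2(A;\ell^{q_e}(B))}\le Y_{d,e}$, which is justified by $q_e\le 2$. The only differences are that you do the swap first and write out the exponent arithmetic the paper leaves as identities.
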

\begin{proof}
Apply Lemma~\ref{lem:interpolation} with
\[
 (p_0,q_0)=(q_d,2),\qquad (p_1,q_1)=(2,q_e),\qquad\eta=e/r.
\]
The identities
\[
 \frac1{q_r}=\frac dr\frac1{q_d}+\frac er\frac12
            =\frac dr\frac12+\frac er\frac1{q_e}
\]
show that both interpolated exponents equal $q_r$.
Finally, because $q_e\le2$, Minkowski gives
\[
 \norm a_{\ell^2(A;\ell^{q_e}(B))}
 \le\norm a_{\ell^{q_e}(B;\ell^2(A))}=Y_{d,e}.
\]
\end{proof}

We now establish the seed estimates that determine the $M$-part of the
weight. Bounds in Propositions~\ref{prop:second} and~\ref{prop:third} are
trivial when $M$ is below the relevant level.

\begin{proposition}[Second level]\label{prop:second}
There is an absolute constant $C_2$ such that every complex-valued $F$ of
degree at most $M\ge1$ satisfies
\begin{equation}\label{eq:second}
 A_2(F)\le C_2M^{3/2}\norm F_\infty.
\end{equation}
\end{proposition}
\begin{proof}
Assume $M\ge2$. Fix a partition $[n]=I\mathbin{\dot\cup}J$ and write
\[
 F(x,y)=\sum_{A\subseteq I}\sum_{B\subseteq J}a_{A,B}w_A(x)w_B(y),
 \qquad a_{A,B}=\wh F(A\cup B).
\]
For $i\in I$, define $B_i = B_i(y)$ to be the coefficient (depending on $y$) in front of $x_i$ in $F$, namely,
\[
 B_i(y)=\sum_{B\subseteq J}a_{\{i\},B}w_B(y).
\]
Its degree is at most $M-1$, and its first homogeneous part has
coefficients $a_{\{i\},\{j\}}$, $j\in J$. Apply Lemma~\ref{lem:L1L2} to $B_i$ with $y$ as variable and
$j=1$ in~\eqref{eq:fixed-level}, gives
\[
 \left(\sum_{j\in J}|a_{\{i\},\{j\}}|^2\right)^{1/2}
 \le C\sqrt M\norm{B_i}_1.
\]
Consequently,
\[
 \mathsf R:=\sum_{i\in I}
 \left(\sum_{j\in J}|a_{\{i\},\{j\}}|^2\right)^{1/2}
 \le C\sqrt M\,\E_y\sum_{i\in I}|B_i(y)|.
\]
For fixed $y$, the $B_i(y)$ are the first-level Fourier coefficients of
$x\mapsto F(x,y)$. By applying the first-level estimate Lemma~\ref{lem:first} to function $F$ with variable $x$,
\[
 \sum_{i\in I}|B_i(y)|\le C_1M\norm{F(\cdot,y)}_\infty
 \le C_1M\norm F_\infty.
\]
Thus
\begin{equation}\label{eq:rows2}
 \mathsf R\le CM^{3/2}\norm F_\infty.
\end{equation}
Interchanging $I$ and $J$ gives the column estimate
\begin{equation}\label{eq:cols2}
 \mathsf C:=\sum_{j\in J}
 \left(\sum_{i\in I}|a_{\{i\},\{j\}}|^2\right)^{1/2}
 \le CM^{3/2}\norm F_\infty.
\end{equation}
Lemma~\ref{lem:two-block} with $d=e=1$ gives
\begin{equation}\label{eq:cross2}
 \left(\sum_{i\in I,j\in J}|\wh F(\{i,j\})|^{4/3}\right)^{3/4}
 \le\mathsf R^{1/2}\mathsf C^{1/2}
 \le CM^{3/2}\norm F_\infty.
\end{equation}
Color each coordinate independently $I$ or $J$, with probability $1/2$.
Every two-element set is separated with probability $1/2$ and is then
counted exactly once in the sum. Taking expectations of the $4/3$-power
of~\eqref{eq:cross2} gives
\[
 \frac12A_2(F)^{4/3}\le C^{4/3}M^2\norm F_\infty^{4/3},
\]
which proves~\eqref{eq:second}.
\end{proof}

\begin{proposition}[Third level]\label{prop:third}
There is an absolute constant $C_3$ such that every complex-valued $F$ of
degree at most $M\ge1$ satisfies
\begin{equation}\label{eq:third}
 A_3(F)\le C_3M^{5/3}\norm F_\infty.
\end{equation}
\end{proposition}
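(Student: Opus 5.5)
We mimic the proof of Proposition~\ref{prop:second}, splitting the third level as a $(2,1)$ two-block array. Assume $M\ge3$ and fix a partition $[n]=I\mathbin{\dot\cup}J$. Write $a_{A,B}=\wh F(A\cup B)$ and consider the bidegree $(2,1)$ slice: $|A|=2$ with $A\subseteq I$, and $B=\{j\}$ with $j\in J$. Lemma~\ref{lem:two-block} with $d=2$, $e=1$, $\theta=2/3$ bounds the $\ell^{q_3}=\ell^{3/2}$ norm of this slice by $X_{2,1}^{2/3}Y_{2,1}^{1/3}$. It therefore suffices to show
\[
X_{2,1}\lesssim M^{5/3}\norm F_\infty,\qquad Y_{2,1}\lesssim M^{5/3}\norm F_\infty .
\]
Alternatively, it suffices to show $X_{2,1}\lesssim M^{a}$ and $Y_{2,1}\lesssim M^{b}$ with $\tfrac23a+\tfrac13b\le\tfrac53$.

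\textbf{Row norm $X_{2,1}$.} For $A\subseteq I$ with $|A|=2$, let $B_A(y)=\sum_{B\subseteq J}a_{A,B}w_B(y)$. This function has degree $\le M-2$, and its first level has coefficients $a_{A,\{j\}}$. By \eqref{eq:fixed-level} with $j=1$,
\[
\Bigl(\sum_j|a_{A,\{j\}}|^2\Bigr)^{1/2}\le C\sqrt M\,\norm{B_A}_1 \le C\sqrt M\,\norm{B_A}_{4/3}.
\]
Hence $X_{2,1}\le C\sqrt M\,\bigl(\sum_A\norm{B_A}_{4/3}^{4/3}\bigr)^{3/4}$. Since $4/3\ge1$, Minkowski (or simply Jensen/H\"older with exponent $4/3$) moves the expectation outside, giving
\[
X_{2,1}\le C\sqrt M\,\Bigl(\E_y\sum_A|B_A(y)|^{4/3}\Bigr)^{3/4}.
\]
For fixed $y$, the numbers $B_A(y)$ are the level-2 coefficients of $x\mapsto F(x,y)$. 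So Proposition~\ref{prop:second} gives $\bigl(\sum_A|B_A(y)|^{4/3}\bigr)^{3/4}\le C_2M^{3/2}\norm F_\infty$. Therefore $X_{2,1}\lesssim M^{2}\norm F_\infty$, i.e.\ $a=2$.

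\textbf{Column norm $Y_{2,1}$.} For $j\in J$, let $B_j(x)$ be the coefficient of $y_j$. Its level-2 part has coefficients $a_{A,\{j\}}$. Apply \eqref{eq:fixed-level} with $j=2$ and $K\le M$:
\[
\Bigl(\sum_A|a_{A,\{j\}}|^2\Bigr)^{1/2}\le CM\norm{B_j}_1 .
\]
Summing over $j$ and using Lemma~\ref{lem:first} in the $y$ variable for fixed $x$ gives $Y_{2,1}\le CM\cdot C_1M\norm F_\infty= CM^2\norm F_\infty$. This yields only $a=b=2$, i.e.\ $M^2$, which is the main obstacle.

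\textbf{Improving $Y$ to reach $5/3$.} To get $M^{5/3}$, I would improve the column estimate to $b=1$, so that $\tfrac23\cdot2+\tfrac13\cdot1=\tfrac53$. Equivalently, I would make the row side cost $M^{3/2}$ and the column side $M^{2}$. The natural first attempt is to use the noise operator to avoid paying for the level-2 projection in $x$. The level-1 coefficient of $y_j$ in $T_\rho$ applied in $x$ only is still bounded by $C_1M$ in $\ell^1_j$, for each fixed $x$. Then one needs $\ell^2$ control over the level-2 part of $B_j$ at cost $O(1)$ rather than $O(M)$. This would use Bonami--Beckner (Theorem~\ref{thm:BB}) at a fixed $p$ with $\rho=\sqrt{p-1}$ applied only to the level-2 part, whose cost is $\rho^{-2}=O(1)$. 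The difficulty is that Theorem~\ref{thm:BB} bounds $\norm{T_\rho B_j}_2$ by $\norm{B_j}_p$ rather than $\norm{B_j}_1$. One must then sum $\norm{B_j}_p$ over $j$ using an $\ell^1(\ell^p)$ bound on $(B_j(x))_j$. I would attempt this via the first-level sup bound $\norm{F_1}_\infty\le M\norm F_\infty$ from \eqref{eq:first-sup} applied to $\sum_j B_j(x)y_j$ for each $x$.

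If that fails, the fallback is to rebalance which side pays, redistributing the weights through the choice of $\theta$ in Lemma~\ref{lem:two-block} as stated. The final step is the same as in Proposition~\ref{prop:second}. Color each coordinate randomly. A fixed 3-set has exactly two elements in $I$ and one in $J$ with probability $3/8$. Taking expectations of the $3/2$-power of the slice bound then gives $A_3(F)^{3/2}\le\tfrac83 C\,M^{5/2}\norm F_\infty^{3/2}$, which is \eqref{eq:third}.
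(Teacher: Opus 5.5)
\textbf{There is a genuine gap.} As written, your argument proves only $A_3(F)\lesssim M^2\norm F_\infty$, and the step meant to upgrade it targets the wrong factor.

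\textbf{Where the loss is.} It is in your row estimate. There the outer exponent is $q_2=4/3>1$. Yet you extracted the first $y$-level of $B_A$ with the $L^1$--$L^2$ bound \eqref{eq:fixed-level}, which costs $\sqrt M$, and only afterwards weakened $\norm{B_A}_1\le\norm{B_A}_{4/3}$. Instead, apply Theorem~\ref{thm:BB} directly at $p=q_2$ with $\rho_2=\sqrt{q_2-1}=1/\sqrt3$. By Parseval,
\[
 \rho_2\Bigl(\sum_{j\in J}\abs{a_{A,\{j\}}}^2\Bigr)^{1/2}\le\norm{T_{\rho_2}B_A}_2\le\norm{B_A}_{4/3}.
\]
Your Fubini step and Proposition~\ref{prop:second} then give $X_{2,1}\le\sqrt3\,C_2M^{3/2}\norm F_\infty$. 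Your column bound $Y_{2,1}\le CM^2\norm F_\infty$ is correct. Together with it, Lemma~\ref{lem:two-block} yields $X_{2,1}^{2/3}Y_{2,1}^{1/3}\lesssim M\cdot M^{2/3}=M^{5/3}$, and your coloring step (probability $3/8$) finishes the proof.

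This is exactly the paper's proof with $I$ and $J$ interchanged. The paper uses bidegree $(1,2)$. It pays $M\cdot M$ on the $\ell^1$-outer norm via \eqref{eq:fixed-level} with $j=2$ and Lemma~\ref{lem:first}. It pays $\sqrt3\cdot M^{3/2}$ on the $\ell^{4/3}$-outer norm via $\rho_2$ and Proposition~\ref{prop:second}. You even wrote down this split (row $M^{3/2}$, column $M^2$) as an equivalent target, but did not see how to reach it.

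\textbf{Why your proposed improvement fails.} The column norm $Y_{2,1}$ has outer exponent $q_1=1$. At $p=1$, Theorem~\ref{thm:BB} allows only $\rho=0$, so it offers no constant-cost level extraction. In fact, for a fixed partition the bound $Y_{2,1}\lesssim M\norm F_\infty$ is false. Take $K$ odd. Let $\Phi_\epsilon$, for $\epsilon\in\{-1,1\}^K$, be copies on disjoint blocks of $y$-variables of the degree-$K$ Chebyshev polynomial evaluated at the mean of $N\gg K$ coordinates. Then $\abs{\Phi_\epsilon}\le1$, and its first-level coefficients have absolute sum close to $K$. Set
\[
 F(x,y)=2^{-K}\sum_\epsilon\prod_{i\le K}(1+\epsilon_ix_i)\,\Phi_\epsilon(y).
\]
Then $F(x,y)=\Phi_x(y)$, so $\norm F_\infty\le1$ and $\deg F\le2K$. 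Yet $Y_{2,1}\approx\sqrt{\binom{K}{2}}\,K\asymp M^2$.

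Your fallback of choosing $\theta$ is also unavailable. In Lemma~\ref{lem:two-block}, $\theta=d/r$ is forced by the exponent identity that produces the $\ell^{q_3}$ norm.
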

\begin{proof}
Assume $M\ge3$. For a deterministic partition $[n]=I\mathbin{\dot\cup}J$,
consider the array
\[
 a_{i,T}=\wh F(\{i\}\cup T),\qquad i\in I,\quad T\subseteq J,\quad|T|=2.
\]
Lemma~\ref{lem:two-block} with $(d,e)=(1,2)$ gives
\begin{equation}\label{eq:third-interp}
 \norm a_{q_3}\le X^{1/3}Y^{2/3},
\end{equation}
where
\[
 X=\sum_{i\in I}\left(\sum_{|T|=2}|a_{i,T}|^2\right)^{1/2},
 \qquad
 Y=\left[\sum_{|T|=2}
       \left(\sum_{i\in I}|a_{i,T}|^2\right)^{q_2/2}\right]^{1/q_2}.
\]

For reader's convenience we repeat that for $i\in I$, define $B_i = B_i(y)$ to be the coefficient (depending on $y$) of $x_i$ in $F$, that is
\[
 B_i(y)=\sum_{B\subseteq J}a_{\{i\},B}w_B(y).
\]
Its degree is at most $M-1$, and its second homogeneous part has
coefficients $a_{\{i\},\{T\}}$, $j\in J$ where $|T|=2$. Apply Lemma~\ref{lem:L1L2} to $B_i$ with $y$ as variable and
$j=2$ in~\eqref{eq:fixed-level}, gives

\[
 \left(\sum_{|T|=2}|a_{i,T}|^2\right)^{1/2}\le CM\norm{B_i}_1,
\]

For fixed $y$, the $B_i(y)$ are the first-level Fourier coefficients of
$x\mapsto F(x,y)$. By applying the first-level estimate Lemma~\ref{lem:first} to function $F$ with variable $x$,

\begin{equation}\label{eq:third-X}
 X\le CM\E_y\sum_i|B_i(y)|
  \le CM \cdot \E_y [M \| F(\cdot,y) \|_\infty ]
 \le CM^2\norm F_\infty.
\end{equation}
For $T\subseteq J$, $|T|=2$, define $C_T = C_T(x)$ to be the coefficient (depending on $x$) in front  of $y^T$ in $F$:
\[
 C_T(x)=\sum_{A\subseteq I}a_{A,T}w_A(x).
\]
Its first homogeneous part has coefficients $a_{i,T}$. Since $q_2=4/3$
and $\rho_2=\sqrt{q_2-1}=1/\sqrt3$, applying Parseval and hypercontractivity bound Theorem~\ref{thm:BB} on $C_T$ with variable $x$ gives
\[
 \rho_2\left(\sum_{i\in I}|a_{i,T}|^2\right)^{1/2}
 \le\norm{T_{\rho_2}C_T}_2\le\norm{C_T}_{q_2}.
\]
Thus
\[
Y^{q_2}
=\left( \sum_{|T|=2} |a_{i,T}|^2 \right)^{q_2/2}
\le3^{q_2/2}\sum_{|T|=2}\norm{C_T}_{q_2}^{q_2}
 =3^{q_2/2}\E_x\sum_{|T|=2}|C_T(x)|^{q_2}.
\]
For fixed $x$, these $C_T(x)$ are the second-level Fourier coefficients of
$y\mapsto F(x,y)$. Applying the second-level estimate of Proposition~\ref{prop:second} to function $F$ with variable $y$, therefore gives
\begin{equation}\label{eq:third-Y}
 Y
 \le \sqrt{3} \left(\E_x\left[ \sum_{|T|=2} |C_T(x)|^{q_2} \right] \right)^{1/{q_2}
 \le \sqrt{3} \left(\E_x\left[ ( C M^{3/2} \|F(x,\cdot)\|_\infty )^{q_2} \right] \right)^{1/{q_2}} }
 \le CM^{3/2}\norm F_\infty.
\end{equation}
Combining~\eqref{eq:third-interp}--\eqref{eq:third-Y},
\[
 \norm a_{q_3}\le C(M^2)^{1/3}(M^{3/2})^{2/3}\norm F_\infty
 =CM^{5/3}\norm F_\infty.
\]
Under a random bicoloring, a three-element set has exactly one coordinate
in $I$ and two in $J$ with probability $3/8$. Averaging the $q_3$-powers
recovers $(3/8)A_3(F)^{q_3}$ on the left, proving the claim.
\end{proof}

\begin{remark} 
One may think that we can proceed with the same process and obtain a bound for $A_m$ looking like 
$$
A_m(F) \le C M^{5/3} \|F\|_{\infty}
$$
as in Corollary~\ref{cor:homogeneous} or a bound of the weighted square function as in Theorem~\ref{thm:weighted}
$$
\left(\sum_{r=1}^M\frac{A_r(F)^2}{w_{M,r}^2}\right)^{1/2}
 \le C \norm F_\infty
$$
by using each estimate for $A_m(F)$ and summing them up.

However, this naive idea fails since the constant $C$ depends on degree and blow up as degree becomes larger. An easy way to see it is that, at the end of both the second-level and third-level proof, we need to divide by the probability of splitting into given sizes (for second-level, it is $1+1$; while for third-level, it is $1+2$). As degree becomes larger, the probability of dividing into a single configuration will tend to zero even when we divide almost evenly.
\end{remark}

\section{The degree-adapted weighted square function}
\label{sec:weighted}
Fix $B>1/\log2$ and put
\begin{equation}\label{eq:eta}
 \eta_B=B\log2-1>0.
\end{equation}
Choose a fixed constant $C_B\ge8/\eta_B$. A sufficiently large integer
$R_B\ge6^4$, depending only on $B$, will be fixed in \sect{sec:high}.
For $M\ge1$, set
\begin{equation}\label{eq:cutoff}
 L_M=\left\lceil\max\{R_B,C_B\log(M+1)\}\right\rceil.
\end{equation}
Define
\begin{equation}\label{eq:mu}
 \mu_1=1,\qquad\mu_r=\left\lceil\frac{3r}{2}\right\rceil\quad(r\ge2),
\end{equation}
\begin{equation}\label{eq:sigma}
 \sigma_{M,r}=\min\{r,L_M\}^8,
\end{equation}
and
\begin{equation}\label{eq:weight}
 w_{M,r}=M^{\mu_r/r}r^B\sigma_{M,r}\qquad(1\le r\le M).
\end{equation}
Here $\mu_r/r$ is the actual exponent of $M$. The purpose of recording the
numerator $\mu_r$ separately is that the interpolation weights $d/r$ and
$e/r$ turn its splitting cost into
\[
 \frac{(M^{\mu_d/d})^{d/r}(M^{\mu_e/e})^{e/r}}{M^{\mu_r/r}}
 =M^{(\mu_d+\mu_e-\mu_r)/r}.
\]
Thus ordinary additivity of $\mu$ is exactly the relevant arithmetic.

\begin{remark}
One will see that we need to make the factor
$$
\frac{w_{M,d}^{\theta}w_{M,e}^{1-\theta}}{w_{M,r}}
$$
as small as possible in both section \ref{sec:abstract} and subsection \ref{sec:low}. In particular, for small $r$ and for $r$ divided into $d+e$, we must have
$$
\mu_r \ge \mu_d + \mu_e
$$
since otherwise, we will have a factor at least $M^{1/r}$ blowing up as $M$ becomes larger. Moreover, $\mu_2 = \frac{3}{2} \times 2 = 3$ and $\mu_3 = \frac{5}{3} \times 3 = 5$ from the second-level and third-level estimates, which we call ``seeds."

One may think taking $\mu_r = 3r/2$. But this fails for $\mu_3$. One may then think about taking $\mu_r = 5r/3$. Indeed, this works, but we now explain why it can be made better to $\lceil 3r/2 \rceil$. Suppose we are dividing $r$ into $d+e$ and then divide both $d$ and $e$ and so on until they turn into $2$'s and $3's$. Assume $r$ divided into $a$ $2$'s and $b$ $3$'s. We have
$$
\begin{aligned}
    r &= 2a+3b \\
    \mu_r &\ge \mu_d + \mu_e \ge \cdots \ge a \mu_2 + b \mu_3 = 3a+5b.
\end{aligned}
$$
This implies that $\mu_r \ge 3r/2 + b/2$. If we want to make it as small as possible, we should manage our division so that $b$, the number of $3$, is as fewest as possible. This is made plausible by our design of splitting window in \eqref{eq:Dlow} in subsection \ref{sec:low} which guarantees that we have $b=0$ for even $r$ and $b=1$ for odd $r$. Therefore, $\mu_r$ is exactly equal to $\lceil 3r/2 \rceil$.

\begin{figure}[htbp]
\centering
\begin{tikzcd}[bh tree, baseline=(general-root.center)]
  & & & & & & &
  |[bh above=r, alias=general-root]| {}
  & & & & & & & \\[5mm]
  & & & |[bh above=d]| {}
  & & & & & & & & |[bh above=e]| {}
  & & & \\[2mm]
  & |[bh vertex]| {}
  & & & & |[bh vertex]| {}
  & & & & |[bh vertex]| {}
  & & & & |[bh vertex]| {} & \\[-1mm]
  & |[yshift=1mm]| \vdots
  & & & & |[yshift=1mm]| \vdots
  & & & & |[yshift=1mm]| \vdots
  & & & & |[yshift=1mm]| \vdots & \\[-1mm]
  & |[bh vertex]| {}
  & & & & |[bh vertex]| {}
  & & & & |[bh vertex]| {}
  & & & & |[bh vertex]| {} & \\
  |[bh below=2]| {} & & |[bh below=2]| {} & &
  |[bh below=2]| {} & & |[bh below=3]| {} & &
  |[bh below=2]| {} & & |[bh below=3]| {} & &
  |[bh below=2]| {} & & |[bh below=2]| {}
  \arrow[from=1-8, to=2-4]
  \arrow[from=1-8, to=2-12]
  \arrow[from=2-4, to=3-2]
  \arrow[from=2-4, to=3-6]
  \arrow[from=2-12, to=3-10]
  \arrow[from=2-12, to=3-14]
  \arrow[from=5-2, to=6-1]
  \arrow[from=5-2, to=6-3]
  \arrow[from=5-6, to=6-5]
  \arrow[from=5-6, to=6-7]
  \arrow[from=5-10, to=6-9]
  \arrow[from=5-10, to=6-11]
  \arrow[from=5-14, to=6-13]
  \arrow[from=5-14, to=6-15]
\end{tikzcd}
\hspace{10mm}
\begin{tikzcd}[
  bh tree,
  row sep={11mm,between origins},
  baseline=(example-root.center)
]
  & & & & &
  |[bh above=\ExampleRoot, alias=example-root]| {}
  & & & & \\
  & & |[bh above=6]| {}
  & & & & & & |[bh above=5]| {} & \\
  |[bh below=2]| {}
  & & & & |[bh above=4]| {}
  & & & |[bh below=2]| {} & & |[bh below=3]| {} \\
  & & & |[bh below=2]| {}
  & & |[bh below=2]| {} & & & &
  \arrow[from=1-6, to=2-3]
  \arrow[from=1-6, to=2-9]
  \arrow[from=2-3, to=3-1]
  \arrow[from=2-3, to=3-5]
  \arrow[from=2-9, to=3-8]
  \arrow[from=2-9, to=3-10]
  \arrow[from=3-5, to=4-4]
  \arrow[from=3-5, to=4-6]
\end{tikzcd}
\caption{A general recursive splitting tree and a concrete example of $r=13$.}
\label{fig:splitting-trees}
\end{figure}
\end{remark}

For integers $M,N\ge1$, let
\begin{equation}\label{eq:Gamma}
 \Gamma_{M,N}(B)
 =\sup_{\substack{1\le n\le N,\ 0\ne F:\Om_n\to\C\\\deg F\le M}}
 \frac{\left(\sum_{r=1}^M A_r(F)^2/w_{M,r}^2\right)^{1/2}}
      {\norm F_\infty}.
\end{equation}
This is finite before any uniform estimate is proved. Indeed,
$|\wh F(A)|\le\norm F_\infty$ and, for $r\le N$,
$A_r(F)\le\binom{N}{r}^{1/q_r}\norm F_\infty$.
The finite dimension cap avoids assuming the conclusion during the
bootstrap. 

Let
\[
 Q(x,y)=\sum_{d,e\ge0}Q_{d,e}(x,y)
\]
be a polynomial of total degree at most $M$ on
$\Om_{n_x}\times\Om_{n_y}$, where $0\le n_x,n_y\le N$, and write
\[
 Q_{d,e}(x,y)=\sum_{|A|=d,|B|=e}a^{d,e}_{A,B}w_A(x)w_B(y).
\]
The mixed-norm formulas of Lemma~\ref{lem:two-block} define $X_{d,e}$
whenever $d\ge1,e\ge0$, and $Y_{d,e}$ whenever $d\ge0,e\ge1$; when one
block degree is zero its index set consists of the empty set. Coefficients
with $d+e>M$ are zero. For $d\ge2,e\ge0$, put
\[
 \rho_d=\sqrt{\frac{d-1}{d+1}},\qquad\wtX_{d,e}=\rho_d^eX_{d,e},
\]
and, for $d\ge0,e\ge2$, put $\wtY_{d,e}=\rho_e^dY_{d,e}$.

\begin{lemma}[Weighted simultaneous row and column control]\label{lem:simultaneous}
For every $M,N$ and every $Q$ as above,
\begin{equation}\label{eq:rows}
 \sum_{\substack{d\ge2,e\ge0\\d+e\le M}}
 \frac{\wtX_{d,e}^2}{w_{M,d}^2}
 \le\Gamma_{M,N}(B)^2\norm Q_\infty^2,
\end{equation}
\begin{equation}\label{eq:cols}
 \sum_{\substack{d\ge0,e\ge2\\d+e\le M}}
 \frac{\wtY_{d,e}^2}{w_{M,e}^2}
 \le\Gamma_{M,N}(B)^2\norm Q_\infty^2.
\end{equation}
\end{lemma}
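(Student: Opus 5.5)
The plan is to freeze one block of variables, view $Q$ as a polynomial in the other block, and feed it into the definition of $\Gamma_{M,N}(B)$. The noise factors $\rho_d^e$ come from Theorem~\ref{thm:BB} applied in the frozen variable, and a Minkowski swap assembles the $e$-levels. I give the argument for \eqref{eq:rows}; \eqref{eq:cols} follows by exchanging the roles of $x$ and $y$.

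\textbf{Step 1 (coefficient functions).} For $d\ge2$ and $|A|=d$, set
\[
B_{d,A}(y)=\sum_{e\ge0}\sum_{|B|=e}a^{d,e}_{A,B}w_B(y),
\]
the coefficient of $w_A(x)$ in $Q(x,y)$. For each fixed $y$, the function $x\mapsto Q(x,y)$ has degree at most $M$, dimension $n_x\le N$ and sup norm at most $\norm Q_\infty$. Its level-$d$ Walsh coefficients are exactly $B_{d,A}(y)$, so
\[
g_d(y)=A_d(Q(\cdot,y))=\Bigl(\sum_A|B_{d,A}(y)|^{q_d}\Bigr)^{1/q_d}.
\]
By the definition \eqref{eq:Gamma} (trivially if $n_x=0$ or $Q(\cdot,y)=0$), for every $y$,
\[
\sum_{d=2}^M\frac{g_d(y)^2}{w_{M,d}^2}\le\Gamma_{M,N}(B)^2\norm Q_\infty^2 .
\]

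\textbf{Step 2 (hypercontractivity in $y$).} Put $c_{A,e}=\rho_d^e\bigl(\sum_B|a^{d,e}_{A,B}|^2\bigr)^{1/2}$. Since the level-$e$ part of $B_{d,A}$ has coefficients $a^{d,e}_{A,\cdot}$, Parseval and Theorem~\ref{thm:BB} with $p=q_d$, $\rho=\rho_d=\sqrt{q_d-1}$ give
\[
\Bigl(\sum_e c_{A,e}^2\Bigr)^{1/2}=\norm{T_{\rho_d}B_{d,A}}_2\le\norm{B_{d,A}}_{q_d}.
\]

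\textbf{Step 3 (Minkowski and Jensen).} By definition $\wtX_{d,e}=\norm{c_{\cdot,e}}_{\ell^{q_d}_A}$. Since $q_d\le2$, Minkowski's inequality gives $\norm c_{\ell^2_e(\ell^{q_d}_A)}\le\norm c_{\ell^{q_d}_A(\ell^2_e)}$. Combining this with Step~2,
\[
\sum_e\wtX_{d,e}^2\le\Bigl(\sum_A\norm{B_{d,A}}_{q_d}^{q_d}\Bigr)^{2/q_d}=\bigl(\E_y\,g_d(y)^{q_d}\bigr)^{2/q_d}\le\E_y\,g_d(y)^2,
\]
where the last inequality is H\"older/Jensen for the probability measure in $y$ (again using $q_d\le2$). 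Dividing by $w_{M,d}^2$, summing over $2\le d\le M$ (coefficients with $d+e>M$ vanish), and averaging the pointwise bound from Step~1 in $y$ yields \eqref{eq:rows}.

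The only delicate point is Step~3, the order of the norms. The hypercontractive bound naturally controls $\ell^2$ over $e$ inside $\ell^{q_d}$ over $A$, whereas \eqref{eq:rows} asks for the reverse nesting. The Minkowski swap in Step~3 goes in the favourable direction precisely because $q_d\le2$. It is also essential to keep all $e$ together in a single function $B_{d,A}$, rather than treating each bidegree separately, so that one application of Theorem~\ref{thm:BB} handles all $e$ at once. This is where the restriction $d\ge2$ (so that $\rho_d>0$) is used.
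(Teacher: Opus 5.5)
Your proof is correct and follows the paper's argument step for step. Both use the coefficient functions $B_{d,A}$ that keep all $e$-levels together, Parseval plus Theorem~\ref{thm:BB} at $\rho_d=\sqrt{q_d-1}$, the Minkowski swap justified by $q_d\le2$, the bound $\norm{g_d}_{q_d}\le\norm{g_d}_2$, and finally the definition of $\Gamma_{M,N}(B)$ applied to each slice $Q(\cdot,y)$, with the column estimate obtained by exchanging $x$ and $y$.
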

\begin{proof}
We prove~\eqref{eq:rows}. Fix $2\le d\le M$. For $|A|=d$, let
\[
 B_{d,A}(y)=\sum_{e\ge0}\sum_{|B|=e}a^{d,e}_{A,B}w_B(y).
\]
Since $q_d\le2$, Minkowski interchanges the
$\ell_e^2(\ell_A^{q_d})$ norm in the required direction. Parseval in $y$
and Theorem~\ref{thm:BB}, with $\rho_d=\sqrt{q_d-1}$, then give
\begin{align*}
 \left(\sum_{e\ge0}\wtX_{d,e}^2\right)^{1/2}
 &\le\left[\sum_{|A|=d}
   \left(\sum_{e\ge0}\sum_{|B|=e}\rho_d^{2e}|a^{d,e}_{A,B}|^2\right)^{q_d/2}
   \right]^{1/q_d}\\
 &=\left[\sum_{|A|=d}\norm{T_{\rho_d}B_{d,A}}_2^{q_d}\right]^{1/q_d}\\
 &\le\left[\sum_{|A|=d}\norm{B_{d,A}}_{q_d}^{q_d}\right]^{1/q_d}.
\end{align*}
Set
\[
 g_d(y)=\left(\sum_{|A|=d}|B_{d,A}(y)|^{q_d}\right)^{1/q_d}.
\]
The last expression equals $\norm{g_d}_{q_d}\le\norm{g_d}_2$, because the
measure is normalized and $q_d\le2$. Divide by $w_{M,d}$, square, sum in
$d$, and use Tonelli:
\[
 \sum_{\substack{d\ge2,e\ge0\\d+e\le M}}
 \frac{\wtX_{d,e}^2}{w_{M,d}^2}
 \le\E_y\sum_{d=2}^M\frac{g_d(y)^2}{w_{M,d}^2}.
\]
For each fixed $y$, $g_d(y)=A_d(Q(\cdot,y))$. The slice has degree at most
$M$ in at most $N$ variables, and its supremum norm is at most
$\norm Q_\infty$. Definition~\eqref{eq:Gamma} therefore bounds the last
display by $\Gamma_{M,N}(B)^2\norm Q_\infty^2$. A zero slice or an empty
$x$-block contributes zero. Exchanging $x$ and $y$ proves~\eqref{eq:cols}.
\end{proof}

For $d,e\ge2$, $r=d+e$, and $\theta=d/r$, define the exact Boolean
deweighting factor
\begin{equation}\label{eq:deweight}
 R(d,e)=\rho_d^{-ed/r}\rho_e^{-de/r}.
\end{equation}
Writing $a(Q_{d,e})$ for the coefficient vector,
Lemma~\ref{lem:two-block} gives
\begin{equation}\label{eq:deweighted-block}
 \norm{a(Q_{d,e})}_{q_r}
 \le R(d,e)\wtX_{d,e}^{\theta}\wtY_{d,e}^{1-\theta}.
\end{equation}
We shall use the uniform bound
\begin{equation}\label{eq:R-three}
 R(d,e)\le3.
\end{equation}
To verify it, put $\phi(t)=t\log((t+1)/(t-1))$ for $t\ge2$. Since
\[
 \phi'(t)=\log\frac{t+1}{t-1}-\frac{2t}{t^2-1}\le0,
\]
we have $\phi(t)\le\phi(2)=2\log3$. The derivative inequality follows from
\[
 \operatorname{arctanh}u=\int_0^u\frac{ds}{1-s^2}
 \le\frac{u}{1-u^2}\qquad(0<u<1).
\]
Finally,
\[
 \log R(d,e)=\frac{e\phi(d)+d\phi(e)}{2(d+e)}\le\log3,
\]
which proves~\eqref{eq:R-three}.

\section{An abstract random splitting}\label{sec:abstract}
For each $r\ge4$, let $D_r$ be a nonempty subset of $\{2,\ldots,r-2\}$,
symmetric under $d\mapsto r-d$. Put
\begin{equation}\label{eq:capture-def}
 p_r=2^{-r}\sum_{d\in D_r}\binom rd,\qquad k_r=|D_r|.
\end{equation}
Color each coordinate independently $x$ or $y$, with probability $1/2$,
and let $Q_\omega$ be the resulting two-block version of $F$.
This only relabels coordinates, so $\norm{Q_\omega}_\infty=\norm F_\infty$.
Each Walsh monomial has a unique pair of colored index sets. Thus
\begin{equation}\label{eq:ancestry}
 \E_\omega\sum_{d\in D_r}
 \norm{a(Q_{\omega,d,r-d})}_{q_r}^{q_r}
 =p_rA_r(F)^{q_r}.
\end{equation}
For $d\in D_r$, $e=r-d$, and $\theta=d/r$, define
\begin{equation}\label{eq:KM}
 K_M(r,d)=p_r^{-1/q_r}k_r^{1/(2r)}R(d,e)
 \frac{w_{M,d}^{\theta}w_{M,e}^{1-\theta}}{w_{M,r}}.
\end{equation}

\begin{remark}
In the above equality, note that the splitting window $D_r$, whose size is $k_r$, will decide the probability $p_r$. We want to make $K_M(r,d)$ and the ratio of weighted function 
$$
\frac{w_{M,d}^{\theta}w_{M,e}^{1-\theta}}{w_{M,r}}
$$
in \eqref{eq:weight-ratio} to be as small as possible. The most important thing for us is to design the weight $w_{M,r}$.

\end{remark}

\begin{proposition}[Abstract weighted contraction]\label{prop:abstract}
Assume the sets $D_r$ have been chosen for $4\le r\le M$. Define
\begin{equation}\label{eq:c-def}
 c^2=2\sup_{4\le r\le M}\ \sup_{d\in D_r}
 \max\left\{\frac dr,1-\frac dr\right\}K_M(r,d)^2.
\end{equation}
Then every
$F:\Om_n\to\C$, $1\le n\le N$, of degree at most $M$ satisfies
\begin{equation}\label{eq:abstract-contraction}
 \left(\sum_{r=4}^M\frac{A_r(F)^2}{w_{M,r}^2}\right)^{1/2}
 \le c\,\Gamma_{M,N}(B)\norm F_\infty.
\end{equation}
\end{proposition}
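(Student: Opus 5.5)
Fix $r$ with $4\le r\le M$, express $A_r(F)$ through the randomly bicolored blocks, and bound each block by the deweighted two-block inequality~\eqref{eq:deweighted-block}. The product $\wtX^\theta\wtY^{1-\theta}$ will be split by the weighted AM--GM inequality, so that summing over $r$ lands exactly on the left sides of~\eqref{eq:rows} and~\eqref{eq:cols} for the single polynomial $Q_\omega$.

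\emph{Step 1 (one level).} Start from the ancestry identity~\eqref{eq:ancestry}, which gives
\[
A_r(F)\le p_r^{-1/q_r}\Bigl(\E_\omega\sum_{d\in D_r}\norm{a(Q_{\omega,d,r-d})}_{q_r}^{q_r}\Bigr)^{1/q_r}.
\]
There are $k_r$ terms in the inner sum, so H\"older on the $k_r$ indices yields $(\sum_d t_d^{q_r})^{1/q_r}\le k_r^{1/q_r-1/2}(\sum_d t_d^2)^{1/2}$. Note that $1/q_r-1/2=1/(2r)$.

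The expectation is handled in the same direction. Since $q_r\le 2$, Jensen (or H\"older for the probability measure) gives
\[
\bigl(\E_\omega Z^{q_r/2}\bigr)^{1/q_r}\le(\E_\omega Z)^{1/2}.
\]
Combining the two inequalities,
\[
A_r(F)\le p_r^{-1/q_r}k_r^{1/(2r)}\Bigl(\E_\omega\sum_{d\in D_r}\norm{a(Q_{\omega,d,r-d})}_{q_r}^{2}\Bigr)^{1/2}.
\]

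\emph{Step 2 (deweighting and normalization).} Every $d\in D_r$ satisfies $d,e\ge2$, so~\eqref{eq:deweighted-block} applies to each block. Divide by $w_{M,r}$ and write $\wtX_{d,e}^\theta\wtY_{d,e}^{1-\theta}=w_{M,d}^\theta w_{M,e}^{1-\theta}\,U_{d,e}^\theta V_{d,e}^{1-\theta}$, where $U_{d,e}=\wtX_{d,e}/w_{M,d}$ and $V_{d,e}=\wtY_{d,e}/w_{M,e}$ are computed for $Q_\omega$. By the definition~\eqref{eq:KM} of $K_M(r,d)$ this gives
\[
\frac{A_r(F)^2}{w_{M,r}^2}\le\E_\omega\sum_{d\in D_r}K_M(r,d)^2\,U_{d,e}^{2\theta}V_{d,e}^{2(1-\theta)}.
\]
Next apply weighted AM--GM: $U^{2\theta}V^{2(1-\theta)}\le\theta U^2+(1-\theta)V^2\le\max\{\theta,1-\theta\}(U^2+V^2)$. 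By~\eqref{eq:c-def}, this yields the termwise bound $(c^2/2)(U_{d,e}^2+V_{d,e}^2)$.

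\emph{Step 3 (summing over $r$).} Sum over $4\le r\le M$ and move the sum inside $\E_\omega$. For a fixed $\omega$, each pair $(d,e)$ with $d,e\ge2$ and $d+e\le M$ arises from at most one $r$, namely $r=d+e$. Hence
\[
\sum_r\sum_{d\in D_r}\bigl(U_{d,e}^2+V_{d,e}^2\bigr)
\]
is dominated by the full sums in~\eqref{eq:rows} and~\eqref{eq:cols}, which range over a larger index set of nonnegative terms. Lemma~\ref{lem:simultaneous}, applied to $Q_\omega$ with $\norm{Q_\omega}_\infty=\norm F_\infty$, bounds this by $2\Gamma_{M,N}(B)^2\norm F_\infty^2$. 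The dimension caps hold because $n_x,n_y\le n\le N$. Averaging over $\omega$ then gives the total bound $c^2\Gamma_{M,N}(B)^2\norm F_\infty^2$, which is~\eqref{eq:abstract-contraction}.

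The main point to get right is Step 1: the exponents must be ordered correctly. The $k_r^{1/(2r)}$ loss comes from passing from $\ell^{q_r}$ to $\ell^2$ over the window, and Jensen must be used so that the expectation ends up \emph{outside} a square. Only then can Lemma~\ref{lem:simultaneous}, which is a squared-sum statement for a single polynomial, be applied pointwise in $\omega$. The factor $2$ in $c^2$ is exactly what the AM--GM splitting costs, since rows and columns each consume one copy of $\Gamma^2$.
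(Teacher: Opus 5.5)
Your proposal is correct and follows essentially the same route as the paper's proof. The steps coincide: the ancestry identity, the $k_r^{1/(2r)}$ loss from $\ell^{q_r}\to\ell^2$ over $D_r$ combined with $L^{q_r}\to L^2$ in $\omega$, the deweighted block bound, weighted AM--GM, and the row/column budgets of Lemma~\ref{lem:simultaneous} applied to each $Q_\omega$. The only differences are cosmetic: you apply H\"older over $D_r$ before Jensen in $\omega$, and you bound $\theta U^2+(1-\theta)V^2$ by $\max\{\theta,1-\theta\}(U^2+V^2)$ instead of normalizing by $G=\Gamma_{M,N}(B)\norm F_\infty$.
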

\begin{proof}
For each coloring $\omega$, set
\[
 C_r(\omega)=\left(\sum_{d\in D_r}
 \norm{a(Q_{\omega,d,r-d})}_{q_r}^{q_r}\right)^{1/q_r}.
\]
By~\eqref{eq:ancestry} and monotonicity of probability-space $L^p$ norms,
\[
 A_r(F)=p_r^{-1/q_r}\bigl(\E_\omega C_r(\omega)^{q_r}\bigr)^{1/q_r}
 \le p_r^{-1/q_r}\bigl(\E_\omega C_r(\omega)^2\bigr)^{1/2}.
\]
Since $1/q_r-1/2=1/(2r)$, the finite-dimensional coefficient norm inequality
also gives
\[
 C_r(\omega)\le k_r^{1/(2r)}
 \left(\sum_{d\in D_r}\norm{a(Q_{\omega,d,r-d})}_{q_r}^2\right)^{1/2}.
\]
Using~\eqref{eq:deweighted-block} and inserting the weights, we obtain
\begin{equation}\label{eq:main-double-sum}
 \sum_{r=4}^M\frac{A_r(F)^2}{w_{M,r}^2}
 \le\E_\omega\sum_{r=4}^M\sum_{d\in D_r}
 K_M(r,d)^2U_{d,e}^{2\theta}V_{d,e}^{2(1-\theta)},
\end{equation}
where, for this coloring,
\[
 U_{d,e}=\frac{\wtX_{d,e}}{w_{M,d}},\qquad
 V_{d,e}=\frac{\wtY_{d,e}}{w_{M,e}},\qquad e=r-d,\quad\theta=d/r.
\]
Since $\Gamma_{M,N}(B)>0$ (test the function $x\mapsto x_1$), put
$G=\Gamma_{M,N}(B)\norm F_\infty>0$. Lemma~\ref{lem:simultaneous}
gives, in particular, the budgets over all retained pairs:
\[
 \sum_{r=4}^M\sum_{d\in D_r}U_{d,r-d}^2\le G^2,
 \qquad
 \sum_{r=4}^M\sum_{d\in D_r}V_{d,r-d}^2\le G^2.
\]
Normalize $u_{d,e}=U_{d,e}^2/G^2$ and $v_{d,e}=V_{d,e}^2/G^2$.
Weighted AM--GM gives
\[
 u_{d,e}^{\theta}v_{d,e}^{1-\theta}
 \le\theta u_{d,e}+(1-\theta)v_{d,e}.
\]
For every retained pair, definition~\eqref{eq:c-def} implies both
$K_M(r,d)^2\theta\le c^2/2$ and
$K_M(r,d)^2(1-\theta)\le c^2/2$. Hence the double sum
in~\eqref{eq:main-double-sum}, divided by $G^2$, is at most
\[
 \frac{c^2}{2}\sum_{r,d}u_{d,e}
 +\frac{c^2}{2}\sum_{r,d}v_{d,e}\le c^2.
\]
Taking expectations and square roots proves~\eqref{eq:abstract-contraction}.
\end{proof}

The exact weight ratio in~\eqref{eq:KM} is
\begin{equation}\label{eq:weight-ratio}
 \frac{w_{M,d}^{\theta}w_{M,e}^{1-\theta}}{w_{M,r}}
 =M^{(\mu_d+\mu_e-\mu_r)/r}\ee^{-Bh(\theta)}
  \frac{\sigma_{M,d}^{\theta}\sigma_{M,e}^{1-\theta}}{\sigma_{M,r}},
\end{equation}
where
\begin{equation}\label{eq:entropy}
 h(t)=-t\log t-(1-t)\log(1-t).
\end{equation}

\section{Choice of the weight and splitting windows}\label{sec:choices}
We now choose $R_B$ and $D_r$ so that the constant in
Proposition~\ref{prop:abstract} is uniformly smaller than one. To make the
two ranges and their combination precise, define the \emph{local multiplier}
\[
 \kappa_M(r,d)
 :=\sqrt{2\max\{d/r,1-d/r\}}\,K_M(r,d).
\]
The sets $D_r$ may depend on $M$ through $L_M$; this dependence is
suppressed in the notation. For the choices made below, let
\[
 \begin{aligned}
 \Glow(M)&:=\sup_{\substack{4\le r\le M,\ r<L_M\\d\in D_r}}
                   \kappa_M(r,d),\\
 \Ghigh(M)&:=\sup_{\substack{4\le r\le M,\ r\ge L_M\\d\in D_r}}
                   \kappa_M(r,d).
 \end{aligned}
\]
A supremum over an empty range is zero. Since these two ranges partition
all retained splits, the constant $c=c(M)$ of~\eqref{eq:c-def} satisfies
\[
 \boxed{c(M)=\max\{\Glow(M),\Ghigh(M)\}.}
\]
This identity concerns the supremum of the \emph{same local multiplier}.
It is not obtained by adding two separately estimated square functions.

\subsection{Arithmetic of the low-level exponent}\label{sec:arithmetic}
\begin{lemma}[Additivity defect of $\mu_r$]\label{lem:defect}
Let $d,e\ge2$ and $r=d+e$. Then
\begin{equation}\label{eq:defect}
 \mu_d+\mu_e-\mu_r=
 \begin{cases}
  1,&r\text{ is even and }d,e\text{ are odd},\\
  0,&\text{otherwise}.
 \end{cases}
\end{equation}
\end{lemma}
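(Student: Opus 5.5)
Since $d,e\ge2$, all three of $d,e,r$ are at least $2$, so the formula $\mu_s=\lceil 3s/2\rceil$ applies to each of them. The value $\mu_1=1$ never enters. The natural route is a direct parity computation. Write $\lceil 3s/2\rceil = \frac{3s}{2}+\frac{\epsilon(s)}{2}$, where $\epsilon(s)=0$ if $s$ is even and $\epsilon(s)=1$ if $s$ is odd. This holds because $3s$ has the same parity as $s$: when $s$ is odd, $3s/2$ is a half-integer and the ceiling adds exactly $1/2$.

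Substituting into the defect, the linear parts cancel because $\frac{3d}{2}+\frac{3e}{2}-\frac{3r}{2}=0$. Hence
\[
 \mu_d+\mu_e-\mu_r=\tfrac12\bigl(\epsilon(d)+\epsilon(e)-\epsilon(d+e)\bigr).
\]
The remaining step is a four-case check on the parities of $d$ and $e$, using $\epsilon(d+e)\equiv\epsilon(d)+\epsilon(e)\pmod 2$:
\begin{itemize}
\item both even: $0+0-0=0$;
\item one odd and one even: $1+0-1=0$;
\item both odd: $r$ is even and $1+1-0=2$, giving defect $1$.
\end{itemize}
The case of $r$ even with $d,e$ odd is exactly the case in \eqref{eq:defect} with value $1$. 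The other two cases (both even, or exactly one odd, which forces $r$ odd) exhaust ``otherwise'' and give $0$.

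No step is a genuine obstacle. The only point needing care is the hypothesis $d,e\ge2$, which guarantees that the exceptional value $\mu_1=1$ is never used. Indeed, $\lceil 3/2\rceil=2\neq\mu_1$, so the identity would fail if $d$ or $e$ were $1$. This should be noted explicitly in the write-up.
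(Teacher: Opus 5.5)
Your proof is correct and is the paper's argument: write $\mu_s=(3s+\epsilon_s)/2$ for $s\ge2$, let the linear parts cancel, and read the defect $(\epsilon_d+\epsilon_e-\epsilon_r)/2$ off the parities of $d$ and $e$. Your explicit remark that $d,e\ge2$ is what keeps the exceptional value $\mu_1=1$ out of the computation is correct and worth keeping.
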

\begin{proof}
For $s\ge2$, write $\mu_s=(3s+\epsilon_s)/2$, where $\epsilon_s=0$
for even $s$ and $\epsilon_s=1$ for odd $s$. The defect equals
$(\epsilon_d+\epsilon_e-\epsilon_r)/2$, which gives~\eqref{eq:defect}.
\end{proof}

\subsection{The low and intermediate range}\label{sec:low}
For $4\le r<L_M$, choose
\begin{equation}\label{eq:Dlow}
 \Dlow=\left\{d\in\{2,\ldots,r-2\}:\frac r4\le d\le\frac{3r}{4},
          \quad d\text{ even if }r\text{ is even}\right\}.
\end{equation}
For odd $r$ there is no parity restriction. This set is symmetric under
$d\mapsto r-d$. 

\begin{remark}
One may think about the reason why we do not need to impose a parity condition when $r$ is odd. The reason is that if $r$ is odd and $r=d+e$, then $\mu_r = \mu_d+\mu_e$ where $\mu_r$ is defined to be $\lceil 3r/2 \rceil$. However, when $r$ is even, $\mu_d + \mu_e -\mu_r = 1$ when both $d,e$ are odd. Therefore, we impose the parity condition to prevent this from happening.

Basically, the mechanism behind is to divide a large $r$ into smaller numbers, which will eventually go to either $2$ or $3$. Moreover, we want $2$ to be as many as possible. Therefore, we impose a parity condition on even $r$ so that it could only divide into even numbers $d,e$. For odd $r$, each division must divide $r$ into an odd number and an even number. The even number will eventually divides into $2$'s since the parity condition imposed on even numbers. For the new odd number, it is again divided into another odd and another even number. Therefore, when we repeat the process, there will eventually be exactly one $3$ and the others must be $2$. One can look at Figure \ref{fig:splitting-trees} in section \ref{sec:weighted} to get a better feeling.
\end{remark}


\begin{lemma}[Low-range probability]\label{lem:low-capture}
For every integer $r\ge4$, the probability in~\eqref{eq:capture-def}, with
$D_r=\Dlow$, satisfies
\begin{equation}\label{eq:low-prob}
 p_r\ge\frac14.
\end{equation}
\end{lemma}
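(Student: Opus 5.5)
The plan is to write $p_r=\Prob(S\in\Dlow)$, where $S\sim\mathrm{Bin}(r,1/2)$ is the number of coordinates of a fixed $r$-set that receive the color $x$ under the random bicoloring. We then bound the mass of $S$ outside the central window by Chebyshev, treating the parity restriction for even $r$ separately. The boundary constraint $2\le d\le r-2$ is automatic for $r\ge8$, because then $r/4\ge2$ and $3r/4\le r-2$. Below $r=8$ it, together with the remaining small even cases, is handled by a finite check.

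\emph{Tail bound.} Since $\Var S=r/4$, Chebyshev gives
\[
 \Prob\bigl(|S-r/2|>r/4\bigr)\le\frac{r/4}{(r/4)^2}=\frac4r .
\]

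\emph{Odd $r$.} There is no parity restriction, so for odd $r\ge9$ we get $p_r\ge1-4/r\ge1/4$. For $r=5$ the window is $\{2,3\}$, giving $p_5=(10+10)/32=5/8$. For $r=7$ the window is $\{2,\dots,5\}$, giving $p_7=1-2(1+7)/128$.

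\emph{Even $r$.} The key observation is that
\[
 \sum_d(-1)^d\binom rd=0,
\]
so $\Prob(S\text{ even})=1/2$ exactly. Hence
\[
 p_r\ge\Prob(S\text{ even})-\Prob\bigl(|S-r/2|>r/4\bigr)\ge\frac12-\frac4r ,
\]
which is at least $1/4$ once $r\ge16$.

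\emph{Small even $r$.} The cases $r\in\{4,6,\dots,14\}$ are checked directly from binomial coefficients. For example:
\[
 p_4=\frac{6}{16},\qquad p_6=\frac{15+15}{64},\qquad p_8=\frac{28+70+28}{256}.
\]
The cases $r=10,12,14$ are similar. In each the window contains all even $d$ except $0$ and $r$ (and possibly $2$, $r-2$ when $r/4>2$), and the retained mass stays near $1/2$.

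The main obstacle is only the bookkeeping for these small even $r$, where Chebyshev (or Hoeffding) is too weak. That step is a finite computation with no conceptual difficulty, so I would simply tabulate it.
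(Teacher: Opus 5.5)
Your proposal is correct and follows essentially the paper's argument: Chebyshev with variance $r/4$, the exact identity $\Prob(S\text{ even})=\frac12$ giving $p_r\ge\frac12-\frac4r\ge\frac14$ for even $r\ge16$, and an exact finite check of the remaining small cases. The only difference is bookkeeping. You note that the constraint $2\le d\le r-2$ is already automatic for $r\ge8$, so Chebyshev handles all odd $r\ge9$ and only $r=5,7$ and the even cases $r\le14$ need explicit evaluation; the paper instead tabulates every $4\le r\le15$.
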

\begin{proof}
Let $Z_r\sim\operatorname{Bin}(r,1/2)$. Its variance is $r/4$, so
Chebyshev's inequality gives
\[
 \Prob\{|Z_r-r/2|>r/4\}\le\frac4r.
\]
For $r\ge16$ the central window is contained in $\{2,\ldots,r-2\}$.
If $r\ge16$ is odd, the preceding estimate yields $p_r\ge1-4/r\ge3/4$.
If $r\ge16$ is even, $\Prob\{Z_r\text{ is even}\}=1/2$, and hence
\[
 p_r\ge\frac12-\frac4r\ge\frac14.
\]
The remaining values $4\le r\le15$ are checked exactly in
Appendix~\ref{app:capture}.
\end{proof}

For $d\in\Dlow$, Lemma~\ref{lem:defect} gives
$\mu_d+\mu_e=\mu_r$. Since $d,e,r<L_M$,
\[
 \frac{\sigma_{M,d}^{\theta}\sigma_{M,e}^{1-\theta}}{\sigma_{M,r}}
 =\frac{d^{8\theta}e^{8(1-\theta)}}{r^8}=\ee^{-8h(\theta)}.
\]
Thus~\eqref{eq:weight-ratio} becomes
\begin{equation}\label{eq:low-ratio}
 \frac{w_{M,d}^{\theta}w_{M,e}^{1-\theta}}{w_{M,r}}
 =\ee^{-(B+8)h(\theta)}.
\end{equation}

\begin{lemma}[Strict low-range contraction]\label{lem:low-contract}
Set $c_*:=45/(2\ee^4)<1$. For every $M\ge1$, with $D_r=\Dlow$ on the
low range,
\[
 \Glow(M)\le c_*.
\]
\end{lemma}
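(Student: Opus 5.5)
The plan is to bound the local multiplier $\kappa_M(r,d)$ factor by factor, uniformly in $M$, in $4\le r<L_M$ and in $d\in\Dlow$. If the low range is empty then $\Glow(M)=0$, so there is nothing to prove. Otherwise fix $r$ with $4\le r\le M$, $r<L_M$, and $d\in\Dlow$, and put $e=r-d$, $\theta=d/r\in[1/4,3/4]$. First I will check that $\Dlow$ is nonempty and symmetric, so that Proposition~\ref{prop:abstract} applies. For $r=4,5,6$ the sets are $\{2\}$, $\{2,3\}$ and $\{2,4\}$. For larger $r$ the interval $[r/4,3r/4]$ has length at least $2$, so it contains an even integer. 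By~\eqref{eq:KM} and the low-range identity~\eqref{eq:low-ratio}, which rests on Lemma~\ref{lem:defect} and the parity restriction,
\[
 \kappa_M(r,d)=\sqrt{2\max\{\theta,1-\theta\}}\;p_r^{-1/q_r}\,k_r^{1/(2r)}\,R(d,e)\,\ee^{-(B+8)h(\theta)} .
\]
The important point is that the factor $M^{(\mu_d+\mu_e-\mu_r)/r}$ has disappeared. This is why the bound can be taken independent of $M$.

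Next I bound each factor by an absolute constant.
\begin{enumerate}
\item Since $\max\{\theta,1-\theta\}\le 3/4$, the first factor is at most $\sqrt{3/2}$.
\item Lemma~\ref{lem:low-capture} gives $p_r\ge1/4$. Because $1/q_r=(r+1)/(2r)\le5/8$ for $r\ge4$, this yields $p_r^{-1/q_r}\le4^{5/8}$.
\item The window has $k_r\le r/2+1\le r$ elements. Since $r^{1/(2r)}$ is maximized near $r=\ee$, we get $k_r^{1/(2r)}\le \ee^{1/(2\ee)}<6/5$ for $r\ge4$.
\item The deweighting factor satisfies $R(d,e)\le3$ by~\eqref{eq:R-three}.
\end{enumerate}
The product of these four factors is a small absolute constant, roughly $10.5$.

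The gain comes from the entropy. The function $h$ is concave and symmetric about $1/2$, so on $[1/4,3/4]$ it satisfies $h(\theta)\ge h(1/4)=\log4-\tfrac34\log3>0.56$. Since $B>1/\log2>1.44$, we have $B+8>9.44$ and hence $(B+8)h(\theta)>5.3$. Therefore
\[
 \ee^{-(B+8)h(\theta)}\le \ee^{-5.3}\le \ee^{-4}\cdot \ee^{-1.3}.
\]
Multiplying by the prefactor bound of about $10.5$ gives $\kappa_M(r,d)\le 10.5\,\ee^{-1.3}\,\ee^{-4}<\tfrac{45}{2}\ee^{-4}=c_*$. Since $\ee^4>45/2$, we also have $c_*<1$. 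Taking the supremum over the low range proves the lemma.

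The only real obstacle is the arithmetic bookkeeping. The constants must be arranged so that the final comparison lands cleanly at $45/(2\ee^4)$. I would present the estimate as follows: the prefactor is at most $\sqrt{3/2}\cdot4^{5/8}\cdot\tfrac65\cdot3<\tfrac{45}{2}\cdot\ee^{(B+8)h(1/4)-4}$, using the crude inequality $(B+8)h(1/4)\ge 9\cdot0.56>5$. This leaves a margin of a factor greater than $\ee$, so rounding errors cause no trouble. The values $4\le r\le15$ are already covered, because Lemma~\ref{lem:low-capture} holds for every $r\ge4$ through the exact check in Appendix~\ref{app:capture}. No separate case analysis is needed.
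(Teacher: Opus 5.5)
Your proof is correct and follows essentially the same argument as the paper. Both bound $\kappa_M(r,d)$ factor by factor via \eqref{eq:low-ratio}, using $\theta\in[1/4,3/4]$, $p_r\ge1/4$ with $1/q_r\le5/8$, a cardinality bound on $k_r$, $R(d,e)\le3$, and the entropy gain $\ee^{-(B+8)h(1/4)}$; your numerics are sharper than the paper's ($k_r\le r+1$, $5^{1/8}<2$, $h(1/4)>1/2$, $B>0$). Two small arithmetic slips: $\ee^{1/(2\ee)}\approx1.202$ exceeds $6/5$, so justify $k_r^{1/(2r)}<6/5$ by monotonicity of $r^{1/(2r)}$ for $r\ge\ee$, which gives $k_r^{1/(2r)}\le4^{1/8}<6/5$; and $9.44\cdot0.56$ is slightly below $5.3$, although your cruder bound $(B+8)h(1/4)>5$ already suffices.
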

\begin{proof}
The assertion is immediate for an empty range. Otherwise
$\theta\in[1/4,3/4]$, $p_r\ge1/4$, and $k_r\le r+1$.
Since $q_r\ge q_4=8/5$ and $(r+1)^{1/(2r)}$ decreases for $r\ge1$,
\[
 p_r^{-1/q_r}\le4^{5/8},\qquad k_r^{1/(2r)}\le5^{1/8}.
\]
Using~\eqref{eq:R-three}, \eqref{eq:low-ratio}, and
$h(\theta)\ge h(1/4)$ gives the pointwise bound
\[
 \kappa_M(r,d)
 \le\sqrt{\frac32}\,3\cdot4^{5/8}\cdot5^{1/8}
       \ee^{-(B+8)h(1/4)}.
\]
The inequalities
\[
 h(1/4)>\frac12,\quad 4^{5/8}<3,\quad 5^{1/8}<2,\quad
 \sqrt{3/2}<\frac54,
\]
together with $B>0$, make the right side strictly less than
$45/(2\ee^4)=c_*$. Also
$\ee^4>1+4+4^2/2+4^3/6>45/2$, so $c_*<1$.
Taking the low-range supremum proves the result.
\end{proof}

\subsection{The high range}\label{sec:high}
For $r\ge L_M$, choose the shrinking balanced window
\begin{equation}\label{eq:Dhigh}
 \Dhigh=\left\{d\in\{2,\ldots,r-2\}:
           |d-r/2|\le r^{3/4}\right\}.
\end{equation}
No parity restriction is imposed. As before, let $e=r-d$ and $\theta=d/r$.

\begin{lemma}[High-range elementary bounds]\label{lem:high-bounds}
For every integer $r\ge6^4$ and every $d\in\Dhigh$, the corresponding probability and cardinality satisfy
\begin{equation}\label{eq:high-prob}
 p_r\ge1-2\ee^{-2\sqrt r},
\end{equation}
\begin{equation}\label{eq:high-card}
 k_r\le3r^{3/4},
\end{equation}
\begin{equation}\label{eq:balance}
 \max\{\theta,1-\theta\}\le\frac12+r^{-1/4},
\end{equation}
\begin{equation}\label{eq:entropy-high}
 h(\theta)\ge\log2-\frac3{\sqrt r},
\end{equation}
\begin{equation}\label{eq:R-high}
 R(d,e)\le\exp\left(1+\frac{36}{r^2}\right).
\end{equation}
\end{lemma}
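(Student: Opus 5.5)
We prove the five bounds separately; each is elementary once $r\ge6^4$, which gives $r^{1/4}\ge6$ and hence $r^{3/4}\le r/6$ and $r^{-1/4}\le1/6$.

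\emph{Probability \eqref{eq:high-prob}.} Use Hoeffding's inequality for $Z_r\sim\operatorname{Bin}(r,1/2)$:
\[
 \Prob\{|Z_r-r/2|>r^{3/4}\}\le2\exp(-2r^{3/2}/r)=2\ee^{-2\sqrt r}.
\]
The window $|d-r/2|\le r^{3/4}$ lies inside $\{2,\ldots,r-2\}$, because $r/2-r^{3/4}\ge r/3\ge2$. So no mass is lost at the edges, and $p_r\ge1-2\ee^{-2\sqrt r}$.

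\emph{Cardinality \eqref{eq:high-card}.} An interval of length $2r^{3/4}$ contains at most $2r^{3/4}+1\le3r^{3/4}$ integers.

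\emph{Balance \eqref{eq:balance}.} This is immediate: $|\theta-1/2|=|d-r/2|/r\le r^{-1/4}$.

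\emph{Entropy \eqref{eq:entropy-high}.} Write $\theta=\frac12+\delta$ with $|\delta|\le r^{-1/4}\le1/6$, and use the expansion
\[
 \log2-h(\tfrac12+\delta)=\sum_{k\ge1}\frac{(2\delta)^{2k}}{2k(2k-1)}.
\]
This sum is at most $2\delta^2\cdot\frac{1}{1-4\delta^2}\le 2\delta^2\cdot\frac98<3\delta^2\le3r^{-1/2}$. Alternatively, use the bound $h''(t)=-1/(t(1-t))\ge-1/(1/3\cdot2/3)=-9/2$ on $[1/3,2/3]$ together with Taylor's theorem, which gives the same constant $9/4\cdot\delta^2\cdot\ldots\le3\delta^2$.

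\emph{Deweighting \eqref{eq:R-high}.} This is the main computation. Recall $\log R(d,e)=\frac{e\phi(d)+d\phi(e)}{2r}$ with $\phi(t)=t\log\frac{t+1}{t-1}=2t\operatorname{arctanh}(1/t)$. Expanding gives
\[
 \phi(t)=2+\frac{2}{3t^2}+\frac{2}{5t^4}+\cdots\le2+\frac{2}{3t^2}\cdot\frac{1}{1-t^{-2}}\le2+\frac{1}{t^2}\qquad(t\ge2).
\]
Hence
\[
 \log R\le\frac{(e+d)\cdot2}{2r}+\frac{e/d^2+d/e^2}{2r}=1+\frac{e/d^2+d/e^2}{2r}.
\]
Since $d,e\ge r/3$ and $d,e\le r$, each of $e/d^2$ and $d/e^2$ is at most $r/(r/3)^2=9/r$. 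Therefore the error term is at most $\frac{18/r}{2r}=9/r^2\le36/r^2$, and \eqref{eq:R-high} follows.

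The only step requiring care is the series bound for $\phi$: one checks that the tail $\sum_{k\ge1}\frac{2}{(2k+1)t^{2k}}\le\frac{2}{3t^2}\cdot\frac{1}{1-t^{-2}}$ holds, which is at most $1/t^2$ for $t\ge2$. Everything else is routine, and the constants stated in the lemma are generous.
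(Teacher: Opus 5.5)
Your proof is correct and follows essentially the same route as the paper. Both arguments run as follows: Hoeffding with $t=r^{3/4}$, after checking that the window lies in $[r/3,2r/3]\subseteq[2,r-2]$; direct counting and division by $r$ for the cardinality and balance bounds; a second-order bound on $\log2-h(\theta)$; and an expansion of $\phi(t)=2t\operatorname{arctanh}(1/t)$ for $R(d,e)$. The only differences are cosmetic: you bound the power series of $h$ and $\phi$ by geometric series, whereas the paper uses Taylor's theorem with $|h''|\le 9/2$ and the inequality $\operatorname{arctanh}u\le u/(1-u^2)$. Your version also yields the slightly sharper error $9/r^2$ in place of $36/r^2$.
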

\begin{proof}
For $r\ge6^4$, the interval $[r/2-r^{3/4},r/2+r^{3/4}]$ lies in
$[r/3,2r/3]\subseteq[2,r-2]$; in particular, there is no extra truncation
of the binomial window by the block-degree restriction.
For $Z_r\sim\operatorname{Bin}(r,1/2)$,
\[
 \Prob\{|Z_r-r/2|>t\}\le2\exp(-2t^2/r).
\]
For completeness, this follows from
$\E\exp(s(Z_r-r/2))=(\cosh(s/2))^r\le\exp(rs^2/8)$,
exponential Markov inequality, and the choice $s=4t/r$ for each tail.
Taking $t=r^{3/4}$ proves~\eqref{eq:high-prob}. There are at most
$2r^{3/4}+1\le3r^{3/4}$ integers in the interval, proving
\eqref{eq:high-card}; \eqref{eq:balance} follows by dividing by $r$.

We have $|\theta-1/2|\le r^{-1/4}\le1/6$, so $\theta\in[1/3,2/3]$.
Since $h(1/2)=\log2$, $h'(1/2)=0$, and
\[
 |h''(t)|=\frac1{t(1-t)}\le\frac92\qquad(1/3\le t\le2/3),
\]
Taylor's theorem gives
\[
 h(\theta)\ge\log2-\frac94(\theta-1/2)^2
 \ge\log2-\frac{9}{4\sqrt r}
 \ge\log2-\frac3{\sqrt r}.
\]
This is~\eqref{eq:entropy-high}.

Finally,
\[
 \log R(d,e)=\frac{de}{2r}
 \left(\log\frac{d+1}{d-1}+\log\frac{e+1}{e-1}\right).
\]
For $t\ge2$,
$\log((t+1)/(t-1))=2\operatorname{arctanh}(1/t)\le2t/(t^2-1)$.
Consequently,
\begin{align*}
 \log R(d,e)
 &\le\frac{de}{r}\left(\frac{d}{d^2-1}+\frac{e}{e^2-1}\right)\\
 &=1+\frac{e}{r(d^2-1)}+\frac{d}{r(e^2-1)}.
\end{align*}
Since $d,e\ge r/3$ and $d,e\ge2$, each of the last two terms is at most
$18/r^2$: for example,
\[
 \frac{e}{r(d^2-1)}\le\frac1{d^2-1}\le\frac2{d^2}\le\frac{18}{r^2}.
\]
Exponentiating proves~\eqref{eq:R-high}.
\end{proof}

For $r\ge L_M$, Lemma~\ref{lem:defect} gives $\mu_d+\mu_e-\mu_r\le1$.
Also
\[
 \frac{\sigma_{M,d}^{\theta}\sigma_{M,e}^{1-\theta}}{\sigma_{M,r}}\le1,
\]
because $\sigma_{M,r}=L_M^8$ and $\sigma_{M,d},\sigma_{M,e}\le L_M^8$.
Moreover,
\begin{equation}\label{eq:defect-control}
 M^{1/r}\le
 \exp\left(\frac{\log M}{C_B\log(M+1)}\right)
 \le\ee^{1/C_B}.
\end{equation}
This is a uniform bound, not a claim that $M^{1/r}\to1$ throughout the
whole high range with $C_B$ fixed.

Define
\begin{equation}\label{eq:EB}
 \begin{split}
 E_B(r)={}&\frac12\log(1+2r^{-1/4})
       -\log(1-2\ee^{-2\sqrt r})\\
       &+\frac{\log(3r^{3/4})}{2r}
        +\frac{36}{r^2}+\frac{3B}{\sqrt r}.
 \end{split}
\end{equation}
For $r\ge6^4$ this is well-defined, and $E_B(r)\to0$ as $r\to\infty$. We now fix an integer $R_B\ge6^4$ so large that
\begin{equation}\label{eq:RB}
 E_B(r)\le\frac{\eta_B}{4}\qquad\text{for every integer }r\ge R_B.
\end{equation}
This choice depends only on $B$ and completes the definition of $L_M$.

\begin{lemma}[Strict high-range contraction]\label{lem:high-contract}
For every $M\ge1$, with $D_r=\Dhigh$ on the high range,
\begin{equation}\label{eq:high-contract}
 \Ghigh(M)\le\exp(-5\eta_B/8)<1.
\end{equation}
\end{lemma}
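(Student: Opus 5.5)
Fix $M\ge1$ and assume the high range is nonempty. Otherwise $\Ghigh(M)=0$ and there is nothing to prove. Take $4\le r\le M$ with $r\ge L_M$ and $d\in\Dhigh$. Since $L_M\ge R_B\ge6^4$, every bound of Lemma~\ref{lem:high-bounds} is available. The plan is to bound $\log\kappa_M(r,d)$ term by term and collect the errors into $E_B(r)$ from~\eqref{eq:EB}. Writing $\kappa_M(r,d)=\sqrt{2\max\{\theta,1-\theta\}}\,K_M(r,d)$ and expanding $K_M$ via~\eqref{eq:KM} and~\eqref{eq:weight-ratio}, we get
\[
 \log\kappa_M(r,d)=\tfrac12\log\bigl(2\max\{\theta,1-\theta\}\bigr)-\tfrac1{q_r}\log p_r+\tfrac{\log k_r}{2r}+\log R(d,e)+\tfrac{\mu_d+\mu_e-\mu_r}{r}\log M-Bh(\theta)+\log\frac{\sigma_{M,d}^\theta\sigma_{M,e}^{1-\theta}}{\sigma_{M,r}}.
\]

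The terms are estimated one at a time.
\begin{enumerate}
\item By~\eqref{eq:balance}, the first term is at most $\frac12\log(1+2r^{-1/4})$.
\item Since $q_r\ge1$ and $p_r\le1$, we have $-\frac1{q_r}\log p_r\le-\log p_r$. By~\eqref{eq:high-prob} this is at most $-\log(1-2\ee^{-2\sqrt r})$.
\item By~\eqref{eq:high-card}, the $k_r$ term is at most $\log(3r^{3/4})/(2r)$.
\item By~\eqref{eq:R-high}, $\log R(d,e)\le 1+36/r^2$.
\item By Lemma~\ref{lem:defect} the defect is at most $1$, so by~\eqref{eq:defect-control} the $M$ term is at most $1/C_B\le\eta_B/8$.
\item By~\eqref{eq:entropy-high}, $-Bh(\theta)\le-B\log2+3B/\sqrt r$.
\item The $\sigma$ ratio is at most $1$, as noted before~\eqref{eq:EB}, so its logarithm is nonpositive.
\end{enumerate}

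Summing these bounds gives
\[
 \log\kappa_M(r,d)\le 1-B\log2+\frac{\eta_B}{8}+E_B(r)=-\eta_B+\frac{\eta_B}8+E_B(r).
\]
Because $r\ge L_M\ge R_B$, condition~\eqref{eq:RB} gives $E_B(r)\le\eta_B/4$. Hence $\log\kappa_M(r,d)\le-\eta_B+\eta_B/8+\eta_B/4=-5\eta_B/8$. This bound is uniform in $r$, $d$ and $M$, so taking the supremum gives~\eqref{eq:high-contract}. Strict inequality with $1$ holds because $\eta_B>0$.

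The only delicate point is the bookkeeping. The "1" coming from $R(d,e)$ in~\eqref{eq:R-high} combines with the entropy gain $B\log 2$ to produce exactly $-\eta_B$. This is the $\ee\,2^{-B}$ mechanism described in the abstract. The defect cost must be kept separate from $E_B$, since it does not vanish as $r\to\infty$ and is controlled only through the choice $C_B\ge8/\eta_B$. One must also check that each finite-$r$ error term in $E_B(r)$ corresponds exactly to one of the bounds above, with no hidden factor such as $1/q_r$ versus $1$. Using $q_r\ge1$ only loosens the bound in the safe direction.
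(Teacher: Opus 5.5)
Your proposal is correct and follows the paper's proof essentially step for step. You expand $\log\kappa_M(r,d)$ into its factors and bound each one by Lemma~\ref{lem:high-bounds}, the defect bound \eqref{eq:defect-control}, and the saturation of $\sigma_{M,r}$; you collect the finite-$r$ errors into $E_B(r)$ and close with $1/C_B\le\eta_B/8$ and $E_B(r)\le\eta_B/4$ from \eqref{eq:RB}.
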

\begin{proof}
There is nothing to prove for an empty high range. For every retained
high-range pair, \eqref{eq:weight-ratio} and~\eqref{eq:defect-control} give
\[
 \frac{w_{M,d}^{\theta}w_{M,e}^{1-\theta}}{w_{M,r}}
 \le\ee^{1/C_B}\ee^{-Bh(\theta)}.
\]
Taking the logarithm of the full local multiplier, and using
Lemma~\ref{lem:high-bounds}, yields
\begin{align*}
 \log\kappa_M(r,d)
 ={}&\tfrac12\log\bigl(2\max\{\theta,1-\theta\}\bigr)
      -\tfrac1{q_r}\log p_r+\tfrac1{2r}\log k_r\\
    &+\log R(d,e)
      +\log\frac{w_{M,d}^{\theta}w_{M,e}^{1-\theta}}{w_{M,r}}\\
 \le{}&1-B\log2+\frac1{C_B}+E_B(r).
\end{align*}
Here $-q_r^{-1}\log p_r\le-\log p_r$ because $0<p_r\le1$ and $q_r\ge1$.
Since $r\ge L_M\ge R_B$, equations~\eqref{eq:eta} and~\eqref{eq:RB},
together with $C_B\ge8/\eta_B$, give
\[
 \log\kappa_M(r,d)
 \le-\eta_B+\frac{\eta_B}{8}+\frac{\eta_B}{4}
 =-\frac{5\eta_B}{8}.
\]
Exponentiating and taking the high-range supremum proves the claim.
\end{proof}

\begin{remark}
One may think of why not imposing a parity condition on the splitting window also for $r \ge L_M$ that is even. There are two reason behind. First, if we do this, then $p_r$ is roughly $1/2$ and $E_B(r)$ defined in \eqref{eq:EB} will tend to $\log 2$ as $r \to \infty$. Although the $M^{1/r}$ and its consequent $1/C_B$ will vanish, we will obtain:
$$
\log\kappa_M(r,d) \le 1 - B \log 2 + E_B(r).
$$
Since $E_B(r) \to 1/2$ as $r$ tends to infinity, in order for $\log\kappa_M(r,d)$ to be negative, we now need
$$
B \log 2 > 1+ \frac{1}{2} = \frac{3}{2} \ \Longleftrightarrow \ B > \frac{3}{2 \log 2},
$$
which makes our exponent to be bigger $3/2+3/(2 \log 2)$.

On the other hand, if we don't impose a parity condition, $M^{1/r}$ will appear. However, it is easy to control when $r$ is above the scale $L_M = \log(M+1)$. This is the main reason why we handle low and high ranges separately.

\end{remark}

\subsection{Combining the two ranges}\label{sec:combine}
For each $4\le r\le M$, use $D_r=\Dlow$ if $r<L_M$, and $D_r=\Dhigh$
if $r\ge L_M$. These sets are nonempty and symmetric. Define
\[
 c_B:=\max\left\{\frac{45}{2\ee^4},\exp\left(-\frac{5\eta_B}{8}\right)\right\}<1.
\]
Lemmas~\ref{lem:low-contract} and~\ref{lem:high-contract} give
\[
 c(M)=\max\{\Glow(M),\Ghigh(M)\}\le c_B
 \qquad\text{for every }M\ge1.
\]
The bound is independent of $M$ and $N$. Crucially, we apply
Proposition~\ref{prop:abstract} \emph{once}, using the union of the two
ranges. Its two energy budgets run over all retained pairs simultaneously.
No factor $\sqrt{\Glow(M)^2+\Ghigh(M)^2}$, and no sum of the two
contraction constants, is introduced.

\begin{proposition}[Uniform high-level contraction]\label{prop:uniform}
For every $B>1/\log2$, the constant $c_B<1$ defined above satisfies,
for every $M,N\ge1$ and every $F:\Om_n\to\C$ with $1\le n\le N$ and
$\deg F\le M$,
\begin{equation}\label{eq:uniform-contraction}
 \left(\sum_{r=4}^M\frac{A_r(F)^2}{w_{M,r}^2}\right)^{1/2}
 \le c_B\Gamma_{M,N}(B)\norm F_\infty.
\end{equation}
\end{proposition}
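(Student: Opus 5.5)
The proof is a direct assembly of Proposition~\ref{prop:abstract} with the window choices of Section~\ref{sec:choices}. Fix $B>1/\log2$, and with it $\eta_B$, $C_B$, $R_B$. Fix $M,N\ge1$. For each $4\le r\le M$ we take $D_r=\Dlow$ if $r<L_M$ and $D_r=\Dhigh$ if $r\ge L_M$.

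First we check that these are admissible windows in the sense of Section~\ref{sec:abstract}: nonempty, symmetric subsets of $\{2,\dots,r-2\}$.
\begin{itemize}
\item For $\Dlow$, symmetry under $d\mapsto r-d$ is clear. The parity condition is preserved because $r-d$ is even whenever $r$ and $d$ are. Nonemptiness for every $r\ge4$ follows from $p_r\ge1/4>0$ in Lemma~\ref{lem:low-capture}.
\item For $\Dhigh$, we have $r\ge L_M\ge R_B\ge6^4$. The window $|d-r/2|\le r^{3/4}$ is symmetric, contains $\lfloor r/2\rfloor$, and by the proof of Lemma~\ref{lem:high-bounds} lies inside $[2,r-2]$.
\end{itemize}
Hence Proposition~\ref{prop:abstract} applies with this family, giving \eqref{eq:abstract-contraction} with constant $c=c(M)$ defined by \eqref{eq:c-def}.

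Second we bound $c(M)$ uniformly. By \eqref{eq:c-def}, $c(M)$ is the supremum of the local multiplier $\kappa_M(r,d)$ over all retained pairs. Since the low and high ranges partition these pairs, $c(M)=\max\{\Glow(M),\Ghigh(M)\}$. Lemma~\ref{lem:low-contract} gives $\Glow(M)\le c_*=45/(2\ee^4)$. Lemma~\ref{lem:high-contract} gives $\Ghigh(M)\le\exp(-5\eta_B/8)$. So $c(M)\le c_B$, and $c_B<1$ because $c_*<1$ and $\eta_B>0$. This bound does not depend on $M$ or $N$.

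Substituting into \eqref{eq:abstract-contraction} gives \eqref{eq:uniform-contraction}. Since $\Gamma_{M,N}(B)\norm F_\infty\ge0$, replacing $c(M)$ by the larger $c_B$ is harmless.

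The only step needing care is the high-range input. Lemma~\ref{lem:high-contract} requires the pointwise bounds of Lemma~\ref{lem:high-bounds}, and these need $r\ge6^4$ and $r\ge R_B$. Both follow from $r\ge L_M\ge R_B\ge 6^4$. The defect bound \eqref{eq:defect-control} uses $r\ge C_B\log(M+1)$, which also holds on this range. We also note that the proposition is applied once, to the union of the two ranges with the shared energy budgets of Lemma~\ref{lem:simultaneous}, so no sum of the two contraction constants appears. I do not expect any real obstacle beyond this bookkeeping.
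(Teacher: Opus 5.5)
Your proposal is correct and follows the paper's proof: apply Proposition~\ref{prop:abstract} once, with $D_r=\Dlow$ for $r<L_M$ and $D_r=\Dhigh$ for $r\ge L_M$, and bound $c(M)=\max\{\Glow(M),\Ghigh(M)\}\le c_B$ using Lemmas~\ref{lem:low-contract} and~\ref{lem:high-contract}. The paper also singles out the trivial case $M<4$, where the left side is an empty sum; your observation that enlarging the constant to $c_B$ is harmless against a nonnegative right side covers this implicitly.
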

\begin{proof}
Apply Proposition~\ref{prop:abstract} with the preceding choice of $D_r$
and the bound $c(M)\le c_B$. The case $M<4$ has a zero left side.
\end{proof}

\section{Closure of the weighted bootstrap}\label{sec:closure}
\begin{theorem}[Uniform degree-adapted weighted estimate]\label{thm:weighted}
For every $B>1/\log2$, there is $K_B<\infty$ such that every complex-valued
$F:\Om_n\to\C$ of degree at most $M$ satisfies
\begin{equation}\label{eq:weighted-theorem}
 \left(\sum_{r=1}^M\frac{A_r(F)^2}{w_{M,r}^2}\right)^{1/2}
 \le K_B\norm F_\infty,
\end{equation}
uniformly in $M$ and $n$.
\end{theorem}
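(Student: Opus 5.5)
The proof is a standard bootstrap closure. We combine the uniform contraction of Proposition~\ref{prop:uniform} for levels $r\ge4$ with the seed estimates for levels $1,2,3$, and then absorb the contraction term using the finiteness of $\Gamma_{M,N}(B)$.

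Fix $M,N\ge1$ and $F:\Om_n\to\C$ with $1\le n\le N$ and $\deg F\le M$. First bound the three low levels against the weight. By Lemma~\ref{lem:first}, $A_1(F)\le C_1M\norm F_\infty$, and $w_{M,1}=M\cdot1^B\cdot1\ge M$. Hence $A_1(F)/w_{M,1}\le C_1\norm F_\infty$.

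For $r=2$ we have $\mu_2=3$, so $M^{\mu_2/2}=M^{3/2}$. Proposition~\ref{prop:second} gives $A_2(F)/w_{M,2}\le C_2\norm F_\infty$, since the remaining factors $2^B\min\{2,L_M\}^8$ are at least $1$.

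For $r=3$ we have $\mu_3=\lceil 9/2\rceil=5$, so $M^{\mu_3/3}=M^{5/3}$. Proposition~\ref{prop:third} gives $A_3(F)/w_{M,3}\le C_3\norm F_\infty$.

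If $M<r$, the level $A_r(F)$ vanishes and the corresponding term is simply absent. Altogether,
\[
 \left(\sum_{r=1}^{\min\{3,M\}}\frac{A_r(F)^2}{w_{M,r}^2}\right)^{1/2}\le C_0\norm F_\infty,
 \qquad C_0:=\sqrt{C_1^2+C_2^2+C_3^2},
\]
and $C_0$ is an absolute constant.

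Next add the high-level part. By Proposition~\ref{prop:uniform}, the sum over $4\le r\le M$ is at most $c_B\Gamma_{M,N}(B)\norm F_\infty$. Using $\sqrt{a^2+b^2}\le a+b$, we get
\[
 W_M(F)\le\bigl(C_0+c_B\Gamma_{M,N}(B)\bigr)\norm F_\infty .
\]
This bound holds for every admissible $F$ with $n\le N$. Taking the supremum in \eqref{eq:Gamma} gives
\[
 \Gamma_{M,N}(B)\le C_0+c_B\Gamma_{M,N}(B).
\]
The crucial point is that $\Gamma_{M,N}(B)$ is finite, which was observed right after \eqref{eq:Gamma} via the crude bound $A_r(F)\le\binom Nr^{1/q_r}\norm F_\infty$. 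We may therefore subtract and obtain
\[
 \Gamma_{M,N}(B)\le K_B:=\frac{C_0}{1-c_B}.
\]
This uses $c_B<1$, which is independent of $M$ and $N$ by \sect{sec:combine}.

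Finally, for a given $F$ on $\Om_n$ of degree at most $M$, take $N=n$. Then $W_M(F)\le\Gamma_{M,n}(B)\norm F_\infty\le K_B\norm F_\infty$, which is \eqref{eq:weighted-theorem}, uniformly in $M$ and $n$.

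The only delicate step is the absorption. It is legitimate only because of the finite dimension cap $N$: without it, $\Gamma$ could a priori be infinite and the inequality $\Gamma\le C_0+c_B\Gamma$ would be vacuous. The dependence of $\Gamma_{M,N}$ on $N$ disappears after absorption because $K_B$ is independent of $N$.

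A minor check is the case where $L_M$ makes the low-level factors $\sigma_{M,r}$ differ from $r^8$. These factors are at least $1$ anyway, so they only help.
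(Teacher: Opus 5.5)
Your proof is correct and follows the paper's argument essentially step for step: the seed budget $C_0=(C_1^2+C_2^2+C_3^2)^{1/2}$ for levels $1$--$3$, then Proposition~\ref{prop:uniform} for $r\ge4$, then absorption of $c_B\Gamma_{M,N}(B)$ using the a priori finiteness of $\Gamma_{M,N}(B)$, and finally $N\ge n$. The only cosmetic differences are these: the paper uses $L_M\ge R_B>3$ to write the first three weights exactly, where you only note that the extra factors are at least $1$, which suffices. The paper also takes $K_B=\max\{1,C_0/(1-c_B)\}$ so that $K_B^{r/m}\le K_B$ can be used later in the terminal interpolation.
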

\begin{proof}
Because $L_M\ge R_B>3$, the first three weights, whenever their levels
are present, are
\[
 w_{M,1}=M,\qquad w_{M,2}=M^{3/2}2^{B+8},\qquad
 w_{M,3}=M^{5/3}3^{B+8}.
\]
Lemma~\ref{lem:first} and Propositions~\ref{prop:second}--\ref{prop:third}
therefore give an absolute constant $C_0$ such that
\begin{equation}\label{eq:seed-budget}
 \left(\sum_{r=1}^{\min\{3,M\}}\frac{A_r(F)^2}{w_{M,r}^2}\right)^{1/2}
 \le C_0\norm F_\infty.
\end{equation}
For instance $C_0=(C_1^2+C_2^2+C_3^2)^{1/2}$ suffices, since $B>0$.
Combining~\eqref{eq:seed-budget} with Proposition~\ref{prop:uniform} gives
\[
 W_M(F)\le C_0\norm F_\infty+c_B\Gamma_{M,N}(B)\norm F_\infty.
\]
Take the supremum over the finite-dimensional class defining
$\Gamma_{M,N}(B)$. Since this constant is finite,
\[
 \Gamma_{M,N}(B)\le C_0+c_B\Gamma_{M,N}(B)
 \quad\Longrightarrow\quad
 \Gamma_{M,N}(B)\le\frac{C_0}{1-c_B}.
\]
The bound is independent of $M$ and $N$. For any given dimension $n$,
take $N\ge n$, and set $K_B=\max\{1,C_0/(1-c_B)\}$.
This proves~\eqref{eq:weighted-theorem} without assuming a dimension-free
bound in advance.
\end{proof}

\section{Proof of the Bohnenblust--Hille estimate}\label{sec:terminal}
We first record the homogeneous consequence.
\begin{corollary}[Homogeneous estimate]\label{cor:homogeneous}
For every $B>1/\log2$, there is $C(B)<\infty$ such that every
complex-valued $m$-homogeneous polynomial $P$ on a Boolean cube satisfies
\begin{equation}\label{eq:homogeneous}
 A_m(P)\le C(B)m^{3/2+B}(\log(m+1))^8\norm P_\infty.
\end{equation}
\end{corollary}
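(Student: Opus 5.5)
Apply Theorem~\ref{thm:weighted} with $F=P$ and $M=m$. Since $P$ is $m$-homogeneous, every level except $r=m$ vanishes, so the square function $W_m(P)$ reduces to the single term $A_m(P)/w_{m,m}$. Theorem~\ref{thm:weighted} then gives
\[
 A_m(P)\le K_B\,w_{m,m}\norm P_\infty .
\]
The remaining work is to unpack $w_{m,m}$ explicitly.

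By definition \eqref{eq:weight},
\[
 w_{m,m}=m^{\mu_m/m}m^B\min\{m,L_m\}^8 .
\]
\begin{itemize}
\item \emph{Exponent $\mu_m/m$.} For $m\ge2$ we have $\mu_m=\lceil 3m/2\rceil\le (3m+1)/2$, so
\[
 m^{\mu_m/m}\le m^{3/2}m^{1/(2m)}\le \sqrt{\ee}\,m^{3/2},
\]
using $m^{1/m}\le\ee^{1/\ee}$ (or simply $m^{1/(2m)}\le 2$). For $m=1$, $\mu_1=1$ and the bound $m^{\mu_m/m}=1$ is trivial.
\item \emph{Auxiliary factor.} We have $\min\{m,L_m\}\le L_m=\lceil\max\{R_B,C_B\log(m+1)\}\rceil$. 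Since $\log(m+1)\ge\log2$, this gives
\[
 L_m\le \bigl(R_B/\log 2+C_B+1\bigr)\log(m+1)=:C'(B)\log(m+1).
\]
Hence $\sigma_{m,m}\le C'(B)^8(\log(m+1))^8$.
\end{itemize}

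Combining the two estimates,
\[
 A_m(P)\le K_B\,\sqrt{\ee}\,C'(B)^8\,m^{3/2+B}(\log(m+1))^8\norm P_\infty ,
\]
which is \eqref{eq:homogeneous} with $C(B)=\sqrt{\ee}\,K_BC'(B)^8$, a constant depending only on $B$.

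There is no real obstacle here. The only point to check is that Theorem~\ref{thm:weighted} is being applied with degree cap $M=m$ exactly, so that the weight is evaluated at $M=m$; and since $R_B$ and $C_B$ depend only on $B$, the logarithmic bound on $L_m$ is uniform in $m$.
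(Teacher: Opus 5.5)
Your proposal is correct and follows essentially the same route as the paper. You apply Theorem~\ref{thm:weighted} with $M=m$, keep only the level $r=m$, and bound $m^{\mu_m/m}\le \ee^{1/(2\ee)}m^{3/2}$ and $L_m\le C'(B)\log(m+1)$; the paper does the same with the constant $D_B=C_B+(R_B+1)/\log2$ in place of your $C'(B)$.
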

\begin{proof}
Apply Theorem~\ref{thm:weighted} with $M=m$. Only level $m$ is present, so
$A_m(P)\le K_Bw_{m,m}\norm P_\infty$. Now
\[
 \frac{\mu_m}{m}\le\frac32+\frac1{2m},\qquad
 m^{1/(2m)}\le\exp\left(\frac1{2\ee}\right).
\]
Equation~\eqref{eq:cutoff} also implies
$L_m\le D_B\log(m+1)$ for a constant $D_B$ depending only on $B$;
for example $D_B=C_B+(R_B+1)/\log2$ suffices. Consequently,
\[
 w_{m,m}=m^{\mu_m/m}m^B\min\{m,L_m\}^8
 \le C(B)m^{3/2+B}(\log(m+1))^8.
\]
\end{proof}

For a nonhomogeneous function, we use the terminal exponent $q_m$ on
\emph{every} level. Interpolating each level with its $\ell^2$ norm makes
its weight appear only to the power $r/m$, rather than paying the entire
level weight.
\begin{lemma}[Level interpolation at the terminal exponent]\label{lem:terminal}
Let $1\le r\le m$ and let $a=(a_S)_{|S|=r}$. Then
\begin{equation}\label{eq:terminal-interp}
 \norm a_{q_m}\le\norm a_{q_r}^{r/m}\norm a_2^{1-r/m}.
\end{equation}
For $r=m$, the right side is interpreted simply as $\norm a_{q_m}$.
\end{lemma}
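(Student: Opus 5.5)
This is log-convexity of $\ell^p$ norms (the Riesz--Thorin/H\"older interpolation in a single sequence space), so I would prove it by H\"older's inequality after checking that the exponents match. The case $r=m$ is trivial by the stated convention, so assume $1\le r<m$. Set $\eta=r/m\in(0,1)$. The first step is the exponent identity
\[
 \frac1{q_m}=\frac{m+1}{2m}=\frac12+\frac1{2m}
 =\frac rm\Bigl(\frac12+\frac1{2r}\Bigr)+\Bigl(1-\frac rm\Bigr)\frac12
 =\frac{\eta}{q_r}+\frac{1-\eta}{2},
\]
using $1/q_s=\tfrac12+\tfrac1{2s}$ for every $s\ge1$. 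Thus $q_m$ lies between $q_r$ and $2$ at exactly the interpolation parameter $\eta=r/m$.

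Given this identity, I would write $|a_S|^{q_m}=|a_S|^{\eta q_m}\,|a_S|^{(1-\eta)q_m}$ and apply H\"older's inequality to the sum over $|S|=r$ with conjugate exponents
\[
 P=\frac{q_r}{\eta q_m},\qquad P'=\frac{2}{(1-\eta)q_m}.
\]
By the identity above, $1/P+1/P'=q_m(\eta/q_r+(1-\eta)/2)=1$. Both $P$ and $P'$ exceed $1$, since each reciprocal is positive and they sum to $1$. H\"older then gives
\[
 \sum_S|a_S|^{q_m}\le\Bigl(\sum_S|a_S|^{q_r}\Bigr)^{\eta q_m/q_r}\Bigl(\sum_S|a_S|^{2}\Bigr)^{(1-\eta)q_m/2}.
\]
Taking $q_m$-th roots yields $\norm a_{q_m}\le\norm a_{q_r}^{\eta}\norm a_2^{1-\eta}$, which is \eqref{eq:terminal-interp}. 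Alternatively, one could simply cite Lemma~\ref{lem:interpolation} with a trivial (one-point) inner index set, $p_0=q_r$, $p_1=2$, and this $\eta$.

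There is no real obstacle here. The only points needing care are the exponent bookkeeping in the identity for $1/q_m$, and the degenerate cases: a zero vector, where both sides vanish, and $r=m$, which is handled by the convention.
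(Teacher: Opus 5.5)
Your proof is correct and is the same as the paper's, which cites log-convexity of finite $\ell^p$ norms together with the identity $\frac{1}{q_m}=\frac{r}{m}\frac{1}{q_r}+\bigl(1-\frac{r}{m}\bigr)\frac12$. Your H\"older computation, with conjugate exponents $q_r/(\eta q_m)$ and $2/((1-\eta)q_m)$, simply writes out that log-convexity step explicitly.
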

\begin{proof}
Use log-convexity of finite $\ell^p$ norms and the identity
\[
 \frac1{q_m}=\frac rm\frac1{q_r}
             +\left(1-\frac rm\right)\frac12.
\]
\end{proof}

\begin{proof}[Proof of Theorem~\ref{thm:main}]
Fix $B>1/\log2$ and let $f$ have degree at most $m$. The case $f=0$ is
immediate. For $f\ne0$, put
\[
 E_r=\norm{f_r}_2,\qquad
 u_r=\frac{A_r(f)}{K_Bw_{m,r}\norm f_\infty},\qquad
 v_r=\frac{E_r}{\norm f_\infty}.
\]
Theorem~\ref{thm:weighted} and Parseval give
\begin{equation}\label{eq:budgets}
 \sum_{r=1}^m u_r^2\le1,\qquad\sum_{r=1}^m v_r^2\le1.
\end{equation}
Indeed, for the second inequality,
$\sum_{r=1}^m E_r^2\le\norm f_2^2\le\norm f_\infty^2$.
By Lemma~\ref{lem:terminal}, the coefficient $q_m$-norm of the $r$th
level is at most
\[
 K_B^{r/m}w_{m,r}^{r/m}\norm f_\infty\,u_r^{r/m}v_r^{1-r/m}.
\]
For $r<m$, set $z_r=u_r^{r/m}v_r^{1-r/m}$, and set $z_m=u_m$.
Weighted AM--GM and~\eqref{eq:budgets} give
\[
 \sum_{r=1}^m z_r^2
 \le\sum_{r=1}^m\left[\frac rm u_r^2+
                 \left(1-\frac rm\right)v_r^2\right]\le2.
\]
Since $q_m\le2$,
\[
 \norm z_{q_m}\le m^{1/q_m-1/2}\norm z_2
 =m^{1/(2m)}\norm z_2\le\sqrt2\exp\left(\frac1{2\ee}\right).
\]
The Fourier levels are disjoint. Taking their $\ell^{q_m}$ sum and using
$K_B^{r/m}\le K_B$, we obtain
\begin{align*}
 \left(\sum_{1\le|A|\le m}|\wh f(A)|^{q_m}\right)^{1/q_m}
 &\le K_B\norm f_\infty
       \left(\sum_{r=1}^m(w_{m,r}^{r/m}z_r)^{q_m}\right)^{1/q_m}\\
 &\le K_B\norm f_\infty\max_{1\le r\le m}w_{m,r}^{r/m}\norm z_{q_m}.
\end{align*}
Thus
\begin{equation}\label{eq:max-weight}
 \left(\sum_{1\le|A|\le m}|\wh f(A)|^{q_m}\right)^{1/q_m}
 \le C(B)\norm f_\infty\max_{1\le r\le m}w_{m,r}^{r/m}.
\end{equation}
Now
\[
 w_{m,r}^{r/m}=m^{\mu_r/m}r^{Br/m}\sigma_{m,r}^{r/m}.
\]
For $1\le r\le m$, we have $\mu_r\le3r/2+1/2$,
$r^{Br/m}\le m^B$, and $\sigma_{m,r}^{r/m}\le L_m^8$. Hence
\[
 \max_{1\le r\le m}w_{m,r}^{r/m}
 \le\exp\left(\frac1{2\ee}\right)m^{3/2+B}L_m^8
 \le C(B)m^{3/2+B}(\log(m+1))^8.
\]
The remaining constant coefficient satisfies $|\wh f(\varnothing)|\le\norm f_\infty$.
Absorbing it into the bound gives
\begin{equation}\label{eq:BH-log}
 \left(\sum_{A\subseteq[n]}|\wh f(A)|^{q_m}\right)^{1/q_m}
 \le C(B)m^{3/2+B}(\log(m+1))^8\norm f_\infty.
\end{equation}
Given $\varepsilon>0$, choose $B=1/\log2+\varepsilon/2$ and use
$(\log(m+1))^8\le C_\varepsilon m^{\varepsilon/2}$ for all integers $m\ge1$.
This proves~\eqref{eq:main}. Finally,
\[
 \beta_0=\frac32+\frac1{\log2}=2.9426950408\ldots<2.943,
\]
so taking $\varepsilon=2.943-\beta_0>0$ gives the last assertion.
\end{proof}

\section{A Fourier-flat Aaronson--Ambainis consequence}
\label{sec:AA}

The weighted estimate has an immediate consequence for a natural structured
class of bounded polynomials.  For a real-valued function
$f:\Omega_n\to\mathbb R$, define
\begin{align}
 Var(f)
 &=\E\bigl(f-\E f\bigr)^2
   =\sum_{\varnothing\ne S\subseteq[n]}\wh f(S)^2,
 \label{eq:variance}\\
 Inf_j(f)
 &=\frac14\E_x\bigl(f(x)-f(x^{\oplus j})\bigr)^2
   =\sum_{S\ni j}\wh f(S)^2,
 \label{eq:influence}
\end{align}
where $x^{\oplus j}$ is obtained from $x$ by changing the sign of its
$j$th coordinate.

Aaronson and Ambainis formulated the following influential-variable
conjecture in connection with classical simulation of quantum query
algorithms \cite{AaronsonAmbainis}.

\begin{conjecture}[Aaronson--Ambainis]
\label{conj:AA}
There are absolute constants $c,C>0$ such that every
$f:\Omega_n\to[-1,1]$ of degree at most $d\ge1$ has a coordinate $j$ for
which
\begin{equation}
\label{eq:AA-conjecture}
 Inf_j(f)\ge c\left(\frac{Var(f)}{d}\right)^C.
\end{equation}
\end{conjecture}

The class below permits arbitrary Fourier signs and a different coefficient
scale on each homogeneous level.

\begin{definition}[Levelwise approximate Fourier flatness]
\label{def:levelwise-flat}
Let $\Lambda\ge1$.  A real polynomial $f:\Omega_n\to\mathbb R$ is
\emph{levelwise $\Lambda$-Fourier-flat} if, for every
$1\le r\le\deg f$, there is a number $\alpha_r\ge0$ such that
\begin{equation}
\label{eq:levelwise-flat}
 \alpha_r\le\abs{\wh f(S)}\le\Lambda\alpha_r
 \qquad\text{for every }S\subseteq[n]\text{ with }\abs S=r.
\end{equation}
When $\Lambda=1$, we simply say that $f$ is
\emph{levelwise Fourier-flat}.  The choice $\alpha_r=0$ means that the
entire $r$th level vanishes.
\end{definition}

Thus, in the exactly flat case, the squared Fourier spectrum is radial:
$\wh f(S)^2$ depends only on $\abs S$.  The signs of the coefficients may
nevertheless be completely arbitrary, so the function itself need not be
symmetric.

\begin{theorem}[Aaronson--Ambainis for levelwise flat spectra]
\label{thm:AA-flat}
Assume the weighted estimate of Theorem \ref{thm:weighted}.  Let
$f:\Omega_n\to[-1,1]$ have degree at most $d\ge1$, and suppose that $f$ is
levelwise $\Lambda$-Fourier-flat.  Then every coordinate $j\in[n]$ satisfies
\begin{equation}
\label{eq:AA-flat-bound}
 Inf_j(f)
 \ge
 \frac{\Var(f)^2}{\Lambda^4 C_1^2 d^{14}}.
\end{equation}
\end{theorem}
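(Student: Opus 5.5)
The plan is to play two facts against each other. Flatness ties the influence of every coordinate to the level energies divided by $n$. The weighted estimate of Theorem~\ref{thm:weighted} ties those same level energies, multiplied by $n$, to $\norm f_\infty^2\le1$. A Cauchy--Schwarz step then eliminates $n$.

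For $1\le r\le d$ write $N_r=\binom nr$ and $V_r=\sum_{|S|=r}\wh f(S)^2$, so $\Var(f)=\sum_{r=1}^d V_r$.

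\textbf{Influence bound.} The set of $S$ with $|S|=r$ and $S\ni j$ has $\binom{n-1}{r-1}=\frac rn N_r$ elements. By the lower half of \eqref{eq:levelwise-flat} each such coefficient has $\wh f(S)^2\ge\alpha_r^2$. By the upper half, $V_r\le\Lambda^2\alpha_r^2N_r$. Hence, by \eqref{eq:influence},
\[
 Inf_j(f)\ge\sum_r\frac rn N_r\alpha_r^2\ge\frac1{\Lambda^2}\sum_r\frac rn V_r .
\]
Levels with $\alpha_r=0$ vanish entirely and contribute to neither side.

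\textbf{Coefficient-norm bound.} Again by flatness, $A_r(f)\ge N_r^{1/q_r}\alpha_r$. Since $1/q_r-1/2=1/(2r)$ and $\alpha_r\ge V_r^{1/2}/(\Lambda N_r^{1/2})$, this gives
\[
 A_r(f)\ge\Lambda^{-1}N_r^{1/(2r)}V_r^{1/2}\ge\Lambda^{-1}(n/r)^{1/2}V_r^{1/2},
\]
using $\binom nr\ge(n/r)^r$. Therefore $\sum_r\frac nr V_r\le\Lambda^2\sum_r A_r(f)^2$. Theorem~\ref{thm:weighted} with $M=d$ yields
\[
 \sum_r A_r(f)^2\le K_B^2\max_{r\le d}w_{d,r}^2 .
\]
We fix one admissible $B>1/\log2$, for instance $B=2$. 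Then $w_{d,r}\le d^{\mu_r/r}r^B L_d^8$, with $\mu_r/r\le2$ and $L_d^8\lesssim\log(d+1)^8\lesssim d^{3}$. This gives $\max_r w_{d,r}\le C d^{7}$. The constant called $C_1$ in \eqref{eq:AA-flat-bound} is then $K_B$ times this absolute constant, which is a new constant, not the $C_1$ of Lemma~\ref{lem:first}.

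\textbf{Combining.} Cauchy--Schwarz gives
\[
 \Var(f)=\sum_r\Bigl(\tfrac rnV_r\Bigr)^{1/2}\Bigl(\tfrac nrV_r\Bigr)^{1/2}\le\Bigl(\sum_r\tfrac rnV_r\Bigr)^{1/2}\Bigl(\sum_r\tfrac nrV_r\Bigr)^{1/2}.
\]
The right side is at most $(\Lambda^2Inf_j(f))^{1/2}(\Lambda^2C_1^2d^{14})^{1/2}$. Squaring yields \eqref{eq:AA-flat-bound}.

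The main obstacle is the exponent bookkeeping in the degree bound $\max_r w_{d,r}\le Cd^7$. The exponent is $\mu_r/r+B$ plus the logarithmic factor $\min\{r,L_d\}^8$, and $L_d$ also contains the constant $R_B$. This has to be absorbed cleanly so that the power of $d$ comes out as $14$ in the final bound; a sharper choice of $B$ would give a smaller power. Everything else is a direct use of flatness and Cauchy--Schwarz.
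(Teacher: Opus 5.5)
Your proof is correct and has the same skeleton as the paper's: the two flatness inequalities \eqref{eq:influence-lower-T} and \eqref{eq:Ar-flat-lower}, followed by eliminating $n$ with Cauchy--Schwarz. The difference is in how Theorem~\ref{thm:weighted} enters.

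The paper quotes the bound $\sum_r A_r(f)^2/r^8\le C_1^2d^6$ in \eqref{eq:weighted-final} rather than deriving it from $w_{M,r}$. It then pairs $\sum_r rV_r$ with $\sum_r V_r/r^9$, and pays a factor $d^8$ in the step $\sum_r V_r/r^4\ge d^{-4}\Var(f)$ of \eqref{eq:TU-CS}. You instead derive $\sum_r A_r(f)^2\le K_B^2\max_{r\le d}w_{d,r}^2$ directly from the weight. Your pairing of $\sum_r (r/n)V_r$ with $\sum_r (n/r)V_r$ loses nothing, so every power of $d$ comes from $\max_r w_{d,r}$. Since $\sum_r A_r^2\le d^8\sum_r A_r^2/r^8$, your chain is never weaker than the paper's.

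Your loose choice $B=2$ gives exactly $d^{14}$. Letting $B\downarrow 1/\log2$ in the same argument gives $\max_r w_{d,r}\le C(B)\,d^{3/2+B}(\log(d+1))^8$. Hence $\Inf_j(f)\ge \Var(f)^2/(C_\varepsilon\Lambda^4 d^{2\beta_0+\varepsilon})$, with $2\beta_0\approx 5.89$ in place of $14$.

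Treating $C_1$ as a new constant depending on the fixed $B$ is also how the paper's proof actually uses it, despite the notational clash with Lemma~\ref{lem:first}.
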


\begin{proof}
For $1\le r\le d$, write
\begin{equation}
\label{eq:Nr-Vr}
 N_r=\binom nr,
 \qquad
 V_r=\sum_{\abs S=r}\wh f(S)^2.
\end{equation}
Then
\begin{equation}
\label{eq:Var-levels}
 Var(f)=\sum_{r=1}^dV_r.
\end{equation}
By levelwise $\Lambda$-flatness,
\begin{equation}
\label{eq:Vr-upper}
 V_r\le\Lambda^2N_r\alpha_r^2.
\end{equation}
Consequently, for every coordinate $j$,
\begin{align}
 Inf_j(f)
 &=\sum_{r=1}^d
   \sum_{\substack{\abs S=r\\j\in S}}\wh f(S)^2
 \notag\\
 &\ge\sum_{r=1}^d\binom{n-1}{r-1}\alpha_r^2
 \notag\\
 &\ge\frac1{\Lambda^2n}\sum_{r=1}^drV_r.
\label{eq:influence-lower-T}
\end{align}
Here we used
$\binom{n-1}{r-1}=(r/n)\binom nr$ and \eqref{eq:Vr-upper}.

The same flatness condition gives a lower bound on the Bohnenblust--Hille
level norm.  Since $2/q_r=(r+1)/r$,
\begin{align}
 A_r(f)^2
 &=\left(\sum_{\abs S=r}\abs{\wh f(S)}^{q_r}\right)^{2/q_r}
 \notag\\
 &\ge\left(N_r\alpha_r^{q_r}\right)^{2/q_r}
 =N_r^{1+1/r}\alpha_r^2
 \notag\\
 &\ge\frac{N_r^{1/r}}{\Lambda^2}V_r
 \ge\frac{n}{\Lambda^2r}V_r.
\label{eq:Ar-flat-lower}
\end{align}
For the last step we used the elementary estimate
\begin{equation}
\label{eq:binomial-lower}
 \binom nr
 =\prod_{k=0}^{r-1}\frac{n-k}{r-k}
 \ge\left(\frac nr\right)^r.
\end{equation}

We proved above
\begin{equation}
\label{eq:weighted-final}
 \left[\sum_{r=1}^M\frac{A_r(F)^2}{r^8}\right]^{1/2}
 \le C_1M^{2.94...}\norm F_\infty.
\end{equation}

Squaring \eqref{eq:weighted-final}, taking $M=d$, and using
$\norm f_\infty\le1$, we obtain
\begin{align}
 C_1^2d^{6}
 &\ge\sum_{r=1}^d\frac{A_r(f)^2}{r^8}
 \ge\frac n{\Lambda^2}
       \sum_{r=1}^d\frac{V_r}{r^9}.
\label{eq:weighted-flat}
\end{align}
Set
\[
 T=\sum_{r=1}^drV_r,
 \qquad
 U=\sum_{r=1}^d\frac{V_r}{r^9}.
\]
Equation \eqref{eq:weighted-flat} gives
\begin{equation}
\label{eq:one-over-n}
 \frac1n\ge\frac{U}{\Lambda^2C_1^2d^{6}}.
\end{equation}
Combining \eqref{eq:influence-lower-T} and \eqref{eq:one-over-n},
\begin{equation}
\label{eq:influence-TU}
 Inf_j(f)
 \ge\frac{TU}{\Lambda^4C_1^2d^{6}}.
\end{equation}
Finally, Cauchy--Schwarz gives
\begin{align}
 TU
 &=\left(\sum_{r=1}^drV_r\right)
   \left(\sum_{r=1}^d\frac{V_r}{r^9}\right)
 \notag\\
 &\ge\left(\sum_{r=1}^d\frac{V_r}{r^4}\right)^2
 \ge\frac1{d^8}\left(\sum_{r=1}^dV_r\right)^2
 =\frac{Var(f)^2}{d^8}.
\label{eq:TU-CS}
\end{align}
Substituting \eqref{eq:TU-CS} into \eqref{eq:influence-TU} proves
\eqref{eq:AA-flat-bound}.
\end{proof}

The conclusion is stronger than the existential assertion in
Conjecture \ref{conj:AA}: every coordinate is influential at the stated
scale.

\begin{corollary}[Exactly levelwise flat polynomials]
\label{cor:AA-exact-flat}
If $f:\Omega_n\to[-1,1]$ has degree at most $d$ and
$\abs{\wh f(S)}=\alpha_r$ whenever $\abs S=r$, then every $j\in[n]$
satisfies
\begin{equation}
\label{eq:AA-exact-flat}
 Inf_j(f)\ge\frac{Var(f)^2}{C_1^2d^{14}}.
\end{equation}
In particular,
\begin{equation}
\label{eq:AA-poly-form}
 \max_j\Inf_j(f)
 \ge\frac1{C_1^2}
      \left(\frac{Var(f)}d\right)^{14}.
\end{equation}
\end{corollary}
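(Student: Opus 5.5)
This corollary is the specialization $\Lambda=1$ of Theorem~\ref{thm:AA-flat}. The plan is to read off the first inequality and then deduce the polynomial form from it by elementary manipulations.

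\textbf{Step 1: the first inequality.} If $\abs{\wh f(S)}=\alpha_r$ for every $\abs S=r$, then \eqref{eq:levelwise-flat} holds with $\Lambda=1$. So $f$ is levelwise Fourier-flat in the sense of Definition~\ref{def:levelwise-flat}. Theorem~\ref{thm:AA-flat} then gives, for every $j\in[n]$,
\[
 Inf_j(f)\ge\frac{\Var(f)^2}{C_1^2d^{14}},
\]
which is \eqref{eq:AA-exact-flat}. The only point to check is that the hypotheses match: $f$ is real, $[-1,1]$-valued, of degree at most $d$, and levels with $\alpha_r=0$ are allowed. All of these are covered by the definition.

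\textbf{Step 2: the polynomial form \eqref{eq:AA-poly-form}.} Since $f$ takes values in $[-1,1]$,
\[
 \Var(f)\le\E f^2\le1,
\]
and $d\ge1$. Hence $\Var(f)^{12}\le1\le d^{0}$, so
\[
 \Var(f)^2\ge\Var(f)^{14}.
\]
Multiplying by $1/(C_1^2d^{14})$ gives
\[
 \frac{\Var(f)^2}{C_1^2d^{14}}\ge\frac1{C_1^2}\left(\frac{\Var(f)}{d}\right)^{14}.
\]
Taking the maximum over $j$ of the left side of \eqref{eq:AA-exact-flat} then yields \eqref{eq:AA-poly-form}. This is the conjectured form \eqref{eq:AA-conjecture} with $c=C_1^{-2}$ and $C=14$.

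\textbf{Main obstacle.} The corollary itself is straightforward. The only delicate point lies upstream, in Theorem~\ref{thm:AA-flat}, which relies on the display \eqref{eq:weighted-final}. That display should be derived from Theorem~\ref{thm:weighted}: since $w_{M,r}\le C(B)M^{3/2+B}r^8$, we get
\[
 \sum_r A_r^2/r^8\le (C M^{3/2+B})^2,
\]
with an exponent compatible with the $d^6$ used there. I would simply cite Theorem~\ref{thm:AA-flat} as stated, with constant $C_1$, and not re-derive it.
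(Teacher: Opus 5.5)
Your proof is correct and is the paper's argument: specialize Theorem~\ref{thm:AA-flat} to $\Lambda=1$, then use $0\le\Var(f)\le1$ to get $\Var(f)^2\ge\Var(f)^{14}$ and divide by $C_1^2d^{14}$. One slip in your aside, which the corollary does not need: squaring $w_{M,r}\le C(B)M^{3/2+B}r^8$ puts $r^{16}$, not $r^8$, in the denominator, so to reach the $r^{-8}$-weighted display with $d^6$ you should instead use $\min\{r,L_M\}^8\le r^4L_M^4$ and pick $B\in(1/\log2,3/2)$, which gives $w_{M,r}\le C(B)M^3r^4$.
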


\begin{proof}
The first statement is Theorem \ref{thm:AA-flat} with $\Lambda=1$.
Since $f$ takes values in $[-1,1]$, one has $0\le Var(f)\le1$ and hence
$\Var(f)^2\ge\Var(f)^{14}$.  This gives \eqref{eq:AA-poly-form}.
\end{proof}

\begin{corollary}[Polynomial within-level distortion]
\label{cor:AA-approx-flat}
Suppose that the hypotheses of Theorem \ref{thm:AA-flat} hold and that
\begin{equation}
\label{eq:Lambda-poly}
 \Lambda\le Ld^\gamma
\end{equation}
for fixed $L\ge1$ and $\gamma\ge0$.  Then every coordinate satisfies
\begin{equation}
\label{eq:AA-approx-poly}
 Inf_j(f)
 \ge\frac1{L^4C_1^2}
 \left(\frac{\Var(f)}d\right)^{14+4\gamma}.
\end{equation}
Thus the Aaronson--Ambainis conjecture holds throughout any family with
polynomially bounded within-level distortion.
\end{corollary}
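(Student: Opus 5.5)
The plan is to specialize Theorem~\ref{thm:AA-flat} and then do a little exponent arithmetic. The hypotheses of Theorem~\ref{thm:AA-flat} are in force, so for every coordinate $j$ we have
\[
 Inf_j(f)\ge\frac{\Var(f)^2}{\Lambda^4C_1^2d^{14}}.
\]
Inserting the assumption \eqref{eq:Lambda-poly}, $\Lambda^4\le L^4d^{4\gamma}$, gives
\[
 Inf_j(f)\ge\frac{\Var(f)^2}{L^4C_1^2d^{14+4\gamma}}.
\]

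The remaining step is to pass from $\Var(f)^2/d^{14+4\gamma}$ to $(\Var(f)/d)^{14+4\gamma}$. As in the proof of Corollary~\ref{cor:AA-exact-flat}, the fact that $f$ takes values in $[-1,1]$ gives $0\le\Var(f)\le\E f^2\le1$. Since the exponent $14+4\gamma\ge14\ge2$, this yields $\Var(f)^2\ge\Var(f)^{14+4\gamma}$. Substituting this into the previous display gives \eqref{eq:AA-approx-poly}.

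For the final sentence of the corollary, I would note that \eqref{eq:AA-approx-poly} has the form \eqref{eq:AA-conjecture} with $c=1/(L^4C_1^2)$ and $C=14+4\gamma$. Both constants are fixed across any family in which $L$ and $\gamma$ are fixed. Moreover, the bound holds for every coordinate, not just for some coordinate.

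There is no real obstacle here. The only point needing care is the direction of the inequality $\Var(f)^2\ge\Var(f)^{C}$, which relies on $\Var(f)\le1$ and $C\ge2$. One should also state that the case $\Var(f)=0$ is trivial.
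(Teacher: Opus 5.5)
Your proposal is correct and follows the paper's proof essentially verbatim. You apply Theorem~\ref{thm:AA-flat}, substitute $\Lambda^4\le L^4d^{4\gamma}$, and then use $0\le\Var(f)\le1$ together with $14+4\gamma\ge2$ to replace $\Var(f)^2$ by $\Var(f)^{14+4\gamma}$.
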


\begin{proof}
Theorem \ref{thm:AA-flat} and \eqref{eq:Lambda-poly} give
\[
 Inf_j(f)
 \ge\frac{\Var(f)^2}{L^4C_1^2d^{14+4\gamma}}.
\]
Because $0\le\Var(f)\le1$ and $14+4\gamma\ge2$, the numerator
$\Var(f)^2$ is at least $\Var(f)^{14+4\gamma}$.
\end{proof}

\begin{remark}[The homogeneous case]
\label{rem:AA-homogeneous}
If $f$ is $d$-homogeneous and its degree-$d$ coefficients are
$\Lambda$-flat, the same estimate
\[
 Inf_j(f)\ge
 \frac{\Var(f)^2}{\Lambda^4C_1^2d^{14}}
\]
follows directly from the homogeneous inequality
$A_d(f)\le C_1d^9\norm f_\infty$.  Indeed,
\[
 Inf_j(f)\ge\frac{d\Var(f)}{\Lambda^2n},
 \qquad
 A_d(f)^2\ge\frac{n\Var(f)}{\Lambda^2d},
\]
and eliminating $n$ gives the claim.  The weighted estimate is what allows
several homogeneous levels, with unrelated scales $\alpha_r$, to be handled
simultaneously.
\end{remark}

\begin{remark}[Relation to earlier flat classes]
\label{rem:AA-history}
Bohnenblust--Hille inequalities have previously been used to verify a very
special Aaronson--Ambainis regime.  Montanaro proved that a bounded flat
$k$-linear form on $k$ variable blocks has
$\Inf_j(f)=\Omega(\Var(f)^2/k^3)$ for every variable
\cite[Corollary 20]{Montanaro}.  The class in Theorem \ref{thm:AA-flat} is
structurally different: it consists of ordinary scalar Boolean polynomials,
requires no block decomposition, may contain many homogeneous levels, permits
arbitrary Fourier signs, and allows the coefficient scale to change with the
level.  It should therefore be viewed as a new radial-spectrum special class,
not as a replacement for the block-multilinear result.
\end{remark}

\begin{remark}[Why the weighted theorem is the natural input]
The ordinary degree-$d$ estimate in Theorem \ref{thm:main} merges all
homogeneous levels into one $\ell_{q_d}$ norm.  Levelwise flatness instead
relates influence to the two distinct moments
$\sum_r rV_r$ and $\sum_rV_r/r^9$.  The weighted square-function estimate
retains precisely this level information, and the Cauchy--Schwarz pairing in
\eqref{eq:TU-CS} converts it into the variance square.  This is why the
Aaronson--Ambainis consequence is most naturally stated as a corollary of
Theorem \ref{thm:weighted}, rather than only of the final unweighted
Bohnenblust--Hille inequality.
\end{remark}

\section{Comments on the mechanism}

\begin{remark}[Why the cube is simpler at the splitting stage]
In an ordinary analytic monomial $z^\alpha$, one coordinate may carry more
than half of the total degree.  The weighted argument of
\cite{PellegrinoTeixeira} must isolate and compress those dominant powers.
For a Walsh monomial, every exponent is either $0$ or $1$.  Once the total
degree is at least $2$, no exponent is dominant.  Thus random bicoloring and
the central entropy contraction treat the entire high-degree coefficient
array.
\end{remark}

\begin{remark}[Why an individual bidegree need not be bounded in
$L_\infty$]
The proof controls the expected $q_r$-mass of coefficients that land in a
central bidegree window.  It never estimates
$\norm{Q_{S,d}}_\infty$ by $\norm Q_\infty$.  Such an estimate is false even
with a polynomial loss; Appendix \ref{app:slices} gives an exponential
counterexample, including a random balanced-split version.
\end{remark}

\begin{remark}[The exponent]
The power $2.94...$ is most probably not optimized.   Better concentration, an even better square function, a different central window, or a sharper
low-level closure may lower the exponent.  The corresponding
Aaronson--Ambainis exponent  is likewise a bookkeeping consequence of
this nonoptimized choice.
\end{remark}

\section{Comments on the exponent}\label{sec:comments}
\begin{remark}[Origin and role of $3/2$]\label{rem:mu}
The $M^{3/2}$ scale comes from the \emph{available upper estimate} in
Proposition~\ref{prop:second}: extracting the first level in one block
costs $M^{1/2}$, and controlling first-level coefficients of the other
block costs $M$. These seed estimates do not assert that $3/2$ is an
optimal exponent for the underlying inequality.

The choice $\mu_r=\lceil3r/2\rceil$ for $r\ge2$ matches
$\mu_2/2=3/2$ and $\mu_3/3=5/3$, while
\[
 \frac{\mu_r}{r}=
 \begin{cases}
  3/2,&r\text{ even},\\
  3/2+1/(2r),&r\text{ odd},
 \end{cases}\qquad(r\ge2).
\]
It is a convenient nearly additive potential, not a proved globally
optimal choice among all possible weights. Its useful feature is the
exact, parity-controlled defect in Lemma~\ref{lem:defect}.
\end{remark}

\begin{remark}[Origin of $1/\log2$]\label{rem:threshold}
When $d,e$ are both comparable to $r=d+e$ and tend to infinity,
$-\log\rho_d=d^{-1}+O(d^{-3})$ gives
\[
 \log R(d,e)=\frac er+\frac dr+O(r^{-2})=1+O(r^{-2}).
\]
At a balanced split the degree weight contributes $2^{-B}$, so the leading
loss--gain factor is $\ee\,2^{-B}$. For example, along $M=r\to\infty$
through even integers with $d=e=r/2$, the full local multiplier
$\kappa_M(r,d)$ tends to this value: the probability and split-count factors
tend to one, the auxiliary factor is saturated in both blocks, and
$M^{1/r}\to1$.

The strict-margin argument used here therefore requires $B>1/\log2$.
This is the threshold of this asymptotic contraction criterion, not a
lower bound for the optimal BH exponent and not an exclusion of every
possible endpoint argument. Uniformity over the \emph{whole} high range
is proved by Lemma~\ref{lem:high-contract}, which retains the separate
$1/C_B$ error rather than treating it as an $o(1)$ term in $r$.
\end{remark}

\begin{remark}[Role of the eighth power]\label{rem:eighth}
The factor $\min\{r,L_M\}^8$ does not improve the seed estimate
$M^{3/2}$. It supplies additional entropy gain in the range $r<L_M$.
Because it saturates at $L_M\asymp_B\log(M+1)$, it contributes only a
logarithmic factor to the final bound. The power $8$ is a convenient
sufficient choice; no optimality is claimed.
\end{remark}

\appendix
\section{Finite check for $p_r \ge \frac{1}{4}$}\label{app:capture}
For $4\le r\le15$, the low-range sets in~\eqref{eq:Dlow} and their
probabilities are listed below. Each entry is obtained exactly from
\[
 p_r=\frac{\sum_{d\in\Dlow}\binom rd}{2^r}.
\]
\begin{center}
\renewcommand{\arraystretch}{1.1}
\begin{tabular}{@{}c l c@{}}
\toprule
$r$ & $\Dlow$ & $p_r$\\
\midrule
4  & $\{2\}$ & $3/8$\\
5  & $\{2,3\}$ & $5/8$\\
6  & $\{2,4\}$ & $15/32$\\
7  & $\{2,3,4,5\}$ & $7/8$\\
8  & $\{2,4,6\}$ & $63/128$\\
9  & $\{3,4,5,6\}$ & $105/128$\\
10 & $\{4,6\}$ & $105/256$\\
11 & $\{3,4,5,6,7,8\}$ & $957/1024$\\
12 & $\{4,6,8\}$ & $957/2048$\\
13 & $\{4,5,6,7,8,9\}$ & $1859/2048$\\
14 & $\{4,6,8,10\}$ & $1001/2048$\\
15 & $\{4,5,6,7,8,9,10,11\}$ & $247/256$\\
\bottomrule
\end{tabular}
\end{center}
Every listed value is strictly larger than $1/4$, completing the finite
part of Lemma~\ref{lem:low-capture}.

\appendix
\section{Failure of direct \texorpdfstring{$L_\infty$}{L-infinity} control of a bidegree slice}
\label{app:slices}

We are grateful to Lars Becker and Haonan Zhang for communicating to us the similar estimates to those below.

\medskip

We record the obstruction motivating the coefficient-mass formulation of the
random-splitting argument. This obstruction also motivates the choice of square function $\mathcal W_B$ at the beginning of this note.

We are grateful to Lars Becker and Haonan Zhang who also sent us the counterexamples that follow.

\begin{proposition}[Deterministic exponential estimate from below]
\label{prop:slice-counterexample}
For each $m\ge1$, there is an $m$-homogeneous multilinear polynomial
$P_m$ on $\Omega_{2m}$ with $\norm{P_m}_\infty=1$ and a set
$S\subset[2m]$ containing exactly one coordinate from each of $m$ prescribed
pairs such that, after replacing the variables indexed by $S$ by $y$
variables and writing $Q_{S,d}$ for the $x$-degree-$d$ part,
\begin{equation}
\label{eq:slice-counterexample}
 \norm{Q_{S,d}}_\infty
 \ge2^{-m/2}\binom md,
 \qquad 0\le d\le m.
\end{equation}
In particular, for $d=\lfloor m/2\rfloor$,
\[
 \norm{Q_{S,d}}_\infty
 \gtrsim\frac{2^{m/2}}{\sqrt m}\norm{P_m}_\infty.
\]
\end{proposition}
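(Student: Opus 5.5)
The plan is to make the polynomial an explicit product of $m$ linear factors, one for each prescribed pair, and to choose each factor so that it has modulus exactly $1$ at every point of the cube. Label the pairs $\{u_i,v_i\}$, $i=1,\dots,m$, so that $\Omega_{2m}$ has coordinates $(u_1,v_1,\dots,u_m,v_m)$. Set
\[
 P_m(u,v)=\prod_{i=1}^m\frac{u_i+\mathrm{i}\,v_i}{\sqrt2},
\]
where $\mathrm{i}$ is the imaginary unit. Each factor is linear and homogeneous in its own pair, and the pairs are disjoint. Hence $P_m$ is $m$-homogeneous and multilinear. For $u_i,v_i\in\{-1,1\}$ we have $|u_i+\mathrm{i}v_i|=\sqrt2$, so $|P_m|\equiv1$ on $\Omega_{2m}$ and $\norm{P_m}_\infty=1$. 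Complex values are allowed throughout the paper, so this is admissible. A real factor such as $(u_i+v_i)/2$ would only give $2^{-m}\binom md$, so the unimodular complex factor is exactly where the gain of $2^{m/2}$ comes from.

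Take $S=\{v_1,\dots,v_m\}$; it contains exactly one coordinate from each pair. Rename $u_i=x_i$ and $v_i=y_i$. Expanding the product gives
\[
 P_m=2^{-m/2}\sum_{A\subseteq[m]}\mathrm{i}^{\,m-|A|}\,w_A(x)\,w_{[m]\setminus A}(y),
\]
so the $x$-degree-$d$ part is
\[
 Q_{S,d}(x,y)=2^{-m/2}\,\mathrm{i}^{\,m-d}\sum_{|A|=d}w_A(x)\,w_{[m]\setminus A}(y).
\]
Evaluating at the all-ones point $x=y=(1,\dots,1)$ makes every character equal to $1$. This gives $|Q_{S,d}(1,1)|=2^{-m/2}\binom md$, which is \eqref{eq:slice-counterexample}.

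For the final assertion, take $d=\lfloor m/2\rfloor$ and use Stirling's estimate $\binom{m}{\lfloor m/2\rfloor}\gtrsim 2^m/\sqrt m$. This yields $\norm{Q_{S,d}}_\infty\gtrsim 2^{m/2}/\sqrt m$, and $\norm{P_m}_\infty=1$.

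There is no real obstacle; the only choice that matters is the unimodular complex linear factor.
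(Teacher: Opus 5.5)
Your proof is correct and is essentially the paper's argument: it uses the same product $2^{-m/2}\prod_{j}(z_{2j-1}+iz_{2j})$ of unimodular factors, and the $x$-degree-$d$ part is read off from the binomial expansion of $(t+i)^m$. The only difference is that the paper allows any one-per-pair set $S$, handling the orientation by choosing signs pair by pair with a scalar $t$ on the $x$ variables, whereas you fix $S$ as the second coordinates and evaluate at the all-ones point, which is the same computation.
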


\begin{proof}
Pair the coordinates as $(1,2),(3,4),\ldots,(2m-1,2m)$ and set
\begin{equation}
\label{eq:Pm}
 P_m(z)=2^{-m/2}\prod_{j=1}^m
 \bigl(z_{2j-1}+iz_{2j}\bigr).
\end{equation}
For $z\in\Omega_{2m}$, every factor has modulus $\sqrt2$, so
$\norm{P_m}_\infty=1$.

Choose $S$ to contain exactly one coordinate from each pair.  Introduce a
scalar $t$ in the $x$ variables:
\[
 Q_S(tx,y)=\sum_{d=0}^m t^dQ_{S,d}(x,y).
\]
Choose the signs of $x$ and $y$ independently, pair by pair.  If the first
coordinate of a pair is an $x$ variable and the second a $y$ variable, choose
the signs so that the normalized factor is $(t+i)/\sqrt2$.  In the opposite
orientation, choose the signs so that the normalized factor is
$(-1+it)/\sqrt2=i(t+i)/\sqrt2$.  Thus, for some unimodular scalar $\lambda$,
\[
 Q_S(tx,y)=\lambda\left(\frac{t+i}{\sqrt2}\right)^m.
\]
The coefficient of $t^d$ has modulus
$2^{-m/2}\binom md$, proving \eqref{eq:slice-counterexample}.  The central
estimate follows from Stirling's formula.
\end{proof}

The failure persists for a uniformly random balanced set $S$.

\begin{proposition}[Random balanced splits also fail]
\label{prop:random-slice-counterexample}
Assume $m$ is even, and choose $S$ uniformly among the $m$-element subsets of
$[2m]$.  For the polynomial $P_m$ in \eqref{eq:Pm},
\begin{equation}
\label{eq:random-slice-failure}
 \Pr\left\{
 \norm{Q_{S,m/2}}_\infty
 \ge\frac{2^{m/6}}{\sqrt{2m}}
 \right\}
 \ge1-\frac{36}{m}.
\end{equation}
In particular, the probability tends to $1$ that the central slice is
exponentially larger than $\norm{P_m}_\infty$.
\end{proposition}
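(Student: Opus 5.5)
The plan is to reduce the random case to the deterministic computation in Proposition~\ref{prop:slice-counterexample}, and then to control, by Chebyshev's inequality, the number of pairs that $S$ splits. Fix an $m$-subset $S\subset[2m]$. Classify the $m$ pairs $(2j-1,2j)$ into three kinds: \emph{split} pairs, with exactly one coordinate in $S$; \emph{$x$-pairs}, with neither coordinate in $S$; and \emph{$y$-pairs}, with both coordinates in $S$. Let $k$ denote the number of split pairs. Since $|S|=m$, the number of $y$-pairs equals the number of $x$-pairs, say $j$ each, so $k=m-2j$. In particular $k\equiv m\equiv 0 \pmod 2$.

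\textbf{Step 1 (the slice for a given $S$).} As in the deterministic proof, insert a scalar $t$ in the $x$ variables. By homogeneity in $x$, $Q_S(tx,y)=\sum_d t^dQ_{S,d}(x,y)$ pointwise for $(x,y)$ on the cube.
\begin{itemize}
\item On a split pair, choose signs exactly as before, so that the factor is a unimodular multiple of $(t+i)/\sqrt2$.
\item On an $x$-pair, the factor is $t(x_{2j-1}+ix_{2j})/\sqrt2$, which is $t$ times a unimodular number.
\item On a $y$-pair, the factor is a unimodular constant.
\end{itemize}
Hence, at a suitable sign point $(x,y)$,
\[
 Q_S(tx,y)=\lambda\,t^{j}\left(\frac{t+i}{\sqrt2}\right)^{k},\qquad |\lambda|=1.
\]
The coefficient of $t^{m/2}$ is then the coefficient of $t^{m/2-j}=t^{k/2}$ in $((t+i)/\sqrt2)^k$. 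Its modulus is $2^{-k/2}\binom{k}{k/2}$, and this equals $|Q_{S,m/2}(x,y)|$ at that sign point. Using $\binom{k}{k/2}\ge 2^k/\sqrt{2k}$ for even $k\ge2$, and noting that $k=0$ is harmless because the bound below forces $k\ge m/3>0$, we get
\[
 \norm{Q_{S,m/2}}_\infty\ge\frac{2^{k/2}}{\sqrt{2k}}.
\]
The map $k\mapsto 2^{k/2}/\sqrt{2k}$ is nondecreasing for $k\ge2$, and $k\le m$. Therefore $k\ge m/3$ implies $\norm{Q_{S,m/2}}_\infty\ge 2^{m/6}/\sqrt{2m}$.

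\textbf{Step 2 (concentration of $k$).} Write $k=\sum_{j=1}^m I_j$, where $I_j$ indicates that pair $j$ is split. Each pair is split with probability
\[
 \Prob(I_j=1)=\frac{2m\cdot m}{2m(2m-1)}=\frac{m}{2m-1},
\]
so $\E k=m^2/(2m-1)\ge m/2$. For the variance, compute $\Prob(I_j=I_l=1)$ for $j\ne l$ from hypergeometric counts:
\[
 \Prob(I_j=I_l=1)=\frac{4\,m(m-1)\binom{2m-4}{m-2}}{\binom{2m}{m}}.
\]
I expect the covariances to be nonpositive, or at least $O(1)$ in total. Either way $\Var k\le m$, and if negative correlation fails, a direct bound on $\Var k$ suffices. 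Chebyshev's inequality with deviation $\E k-m/3\ge m/6$ then gives
\[
 \Prob(k<m/3)\le\frac{36\Var k}{m^2}\le\frac{36}{m}.
\]

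Combining Steps 1 and 2 gives \eqref{eq:random-slice-failure}. I expect the only real work to be the exact covariance computation in Step 2, namely checking $\Var k\le m$; it is routine but must be done carefully. The sign choices in Step 1 merely repeat the deterministic argument.
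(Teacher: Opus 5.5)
Your route is the paper's route: the same classification of pairs (split, $x$-pairs, $y$-pairs, with equal numbers of the last two), the same sign choices giving $\lambda t^{(m-k)/2}((t+i)/\sqrt2)^{k}$, the same central-binomial bound, and Chebyshev with deviation $m/6$ from $\E k=m^2/(2m-1)\ge m/2$. There is, however, a gap in Step~2. The only quantitative input to Chebyshev, $\Var k\le m$, is never established, and the two things you write in its place are both wrong. First, your joint probability carries a spurious factor $m(m-1)$; at $m=2$ it equals $8/6>1$. The correct count is $4\binom{2m-4}{m-2}$ sets, so
\[
 \Prob(I_j=I_l=1)=\frac{4\binom{2m-4}{m-2}}{\binom{2m}{m}}=\frac{m}{2m-1}\cdot\frac{m-1}{2m-3}.
\]
Second, your expectation of nonpositive covariances is false. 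Conditioning on a split pair leaves exactly $m-1$ of the remaining $2m-2$ coordinates in $S$. That perfect balance is the most favorable one for splitting another pair, and indeed
\[
 \mathrm{Cov}(I_j,I_l)=\frac{m}{(2m-1)^2(2m-3)}>0 .
\]
So the negative-correlation shortcut is unavailable. Your alternative, ``$O(1)$ in total,'' is true but was only asserted, so it still had to be computed.

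With the corrected joint probability the computation is short:
\[
 \Var k=\frac{m^2(m-1)}{(2m-1)^2}+\frac{m^2(m-1)}{(2m-1)^2(2m-3)}=\frac{2m^2(m-1)^2}{(2m-3)(2m-1)^2}.
\]
The first term is at most $m/4$, since $4m(m-1)\le(2m-1)^2$. The second term is at most $4/9$ for $m\ge2$, with equality at $m=2$. Hence $\Var k\le m/4+4/9\le m$. This is exactly the variance formula the paper states, and inserting it makes your Chebyshev step, and hence the proof, complete.
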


\begin{proof}
Let $R$ be the number of prescribed pairs crossed by the cut $S$, meaning
that exactly one coordinate of the pair lies in $S$.  Let $a$ and $b$ denote
the numbers of pairs with neither and both coordinates in $S$, respectively.
Because $\abs S=m$,
\[
 a=b=\frac{m-R}{2}.
\]
The same pairwise choice of signs as in the preceding proof gives, for some
$\abs\lambda=1$,
\[
 Q_S(tx,y)=\lambda t^a
 \left(\frac{t+i}{\sqrt2}\right)^R.
\]
Since $R\equiv m\pmod2$, the integer $R$ is even, and the coefficient at
$x$-degree $m/2=a+R/2$ has modulus
\begin{equation}
\label{eq:random-coefficient}
 2^{-R/2}\binom{R}{R/2}.
\end{equation}

Write $R=\sum_{j=1}^m I_j$, where $I_j$ is the crossing indicator for the
$j$th pair.  Direct counting gives
\[
 \E R=\frac{m^2}{2m-1}
 \qquad\text{and}\qquad
 \Var R=
 \frac{2m^2(m-1)^2}{(2m-3)(2m-1)^2}
 \le m.
\]
Since $\E R\ge m/2$, Chebyshev's inequality yields
\[
 \Pr\{R<m/3\}
 \le\Pr\{\abs{R-\E R}>m/6\}
 \le\frac{36}{m}.
\]
For even $R$, the standard central-binomial estimate
$\binom{R}{R/2}\ge2^R/\sqrt{2R}$ and
\eqref{eq:random-coefficient} give
\[
 \norm{Q_{S,m/2}}_\infty
 \ge\frac{2^{R/2}}{\sqrt{2R}}
 \ge\frac{2^{m/6}}{\sqrt{2m}}
\]
whenever $R\ge m/3$.  This proves \eqref{eq:random-slice-failure}.
\end{proof}

\section*{Acknowledgement}
The authors acknowledge the use of ChatGPT (OpenAI, GPT-5.6 Pro) during the preparation of this manuscript, primarily for discussing intermediate arguments, checking calculations, and improving the exposition. All mathematical statements, proofs, and conclusions were independently verified by the authors, who take full responsibility for the content of the paper.
This work was inspired  by the ideas and techniques developed in Section 5 of D. M. Pellegrino and E. V. Teixeira \cite{PellegrinoTeixeira}.

\end{document}